\newtheorem{counterexample}{Counterexample}
\newtheorem*{remark}{Remark}
\newcommand{\timestep}{\ensuremath{\tau}\xspace}
\newcommand{\carlen}{\ensuremath{L}\xspace}
\newcommand{\mosa}{\ensuremath{\succeq_{MoS}^{PMC}}\xspace}
\newcommand{\mosaSMC}{\ensuremath{\succeq_{MoS}^{LSS}}\xspace}
\newcommand{\mosdist}{\ensuremath{\succeq_{d}}\xspace}
\newcommand{\mosspeed}{\ensuremath{\succeq_{v}}\xspace}
\newcommand{\moswl}{\ensuremath{\succeq_{\wl}}\xspace}
\newcommand{\ourpath}{\ensuremath{\pi}\xspace} 
\newcommand{\pset}{\ensuremath{\Pi}\xspace} 
\newcommand{\outlabs}{\ensuremath{\mathcal{O}}\xspace}
\newcommand{\union}{~\ensuremath{{\large \cup}\kern-0.5em\raisebox{0.75ex}{\tiny o}}~\xspace}
\newcommand{\intersec}{\xspace\ensuremath{{\large \cap}\kern-0.5em\raisebox{0.5ex}{\tiny o}}~\xspace}
\newcommand{\compl}{\xspace\ensuremath{{\large \setminus}\kern-0.2em\raisebox{0.5ex}{\tiny o}}~}
\newcommand{\oldstate}{\ensuremath{\bar{s}_b}\xspace}
\newcommand{\newstate}{\ensuremath{\bar{s}_a}\xspace}
\newcommand{\yesdet}{\ensuremath{\mathcal{D}}\xspace}
\newcommand{\nodet}{\ensuremath{\mathscr{M}}\xspace}
\newcommand{\emitdist}{\ensuremath{d}\xspace}
\newcommand{\oursafeprop}{\ensuremath{\psi_{nocol}}\xspace}
\newcommand{\oursafeprob}{\ensuremath{Pr_{nocol}}\xspace}
\newcommand{\detprob}{\ensuremath{Pr_{det}}\xspace}
\newcommand{\tankcount}{\ensuremath{J}\xspace}
\newcommand{\tanksize}{\ensuremath{TS}\xspace}
\newcommand{\wl}{\ensuremath{w}\xspace}
\newcommand{\wlp}{\ensuremath{\hat{w}}\xspace}
\newcommand{\inflow}{\ensuremath{in}\xspace}
\newcommand{\outflow}{\ensuremath{out}\xspace}
\newcommand{\uct}{\ensuremath{UT}\xspace}
\newcommand{\lct}{\ensuremath{LT}\xspace}
\newcommand{\err}{\ensuremath{e}\xspace}
\newcommand{\errwin}{\ensuremath{EW}\xspace}
\newcommand{\fulltank}{\ensuremath{FU}\xspace}
\newcommand{\emptytank}{\ensuremath{EM}\xspace}
\newcommand{\tanksafeprop}{\ensuremath{\psi_{flow}}\xspace}
\newcommand{\timebound}{\ensuremath{T}\xspace}
\newcommand{\mdl}{\ensuremath{\mathsf{M}}\xspace}
\newcommand{\mcp}{\ensuremath{\mathsf{M}_{\text{cpl}}}\xspace}
\newcommand{\mper}{\ensuremath{\textsf{M}_{\text{per}}}\xspace}
\newcommand{\mba}{\ensuremath{\mathsf{M}_{\text{int}}}\xspace}
\newcommand{\mbatt}{\ensuremath{\mathsf{M}_{\text{tt}}}\xspace}
\newcommand{\mbattneg}{\ensuremath{\mathsf{M}_{\text{ttneg}}}\xspace}
\newcommand{\specialcell}[2][c]{%
  \begin{tabular}[#1]{@{}c@{}}#2\end{tabular}}
\crefname{counterexample}{Counterexample}{Counterexamples}
\newcommand\IR[1]{$\dag$\footnote{IR: #1}}
\begin{document}

\title{Monotonic Safety for \\Scalable and Data-Efficient Probabilistic Safety Analysis}

 \author{Matthew Cleaveland}
 \affiliation{%
   \institution{
 University of Pennsylvania}
   \city{Philadelphia}
   \country{USA}}
 \email{mcleav@seas.upenn.edu}

 \author{Ivan Ruchkin}
 \affiliation{%
   \institution{
 University of Pennsylvania}
   \city{Philadelphia}
   \country{USA}}
 \email{iruchkin@seas.upenn.edu}

 \author{Oleg Sokolsky}
 \affiliation{%
   \institution{
 University of Pennsylvania}
   \city{Philadelphia}
   \country{USA}}
 \email{sokolsky@seas.upenn.edu}

 \author{Insup Lee}
 \affiliation{%
   \institution{
 University of Pennsylvania}
   \city{Philadelphia}
   \country{USA}}
 \email{lee@seas.upenn.edu}

\ccsdesc[500]{Software and its engineering~Model checking}
\ccsdesc[300]{Theory of computation~Formal languages and automata theory}

\keywords{probabilistic model checking, statistical model checking, monotonic safety}

\begin{abstract}
    
    Autonomous systems with machine learning-based perception can exhibit unpredictable behaviors that are difficult to quantify, let alone verify. Such behaviors are convenient to capture in probabilistic models, but probabilistic model checking of such models is difficult to scale --- largely due to the non-determinism added to models as a prerequisite for provable conservatism. Statistical model checking (SMC) has been proposed to address the scalability issue. However it requires large amounts of data to account for the aforementioned non-determinism, which in turn limits its scalability. This work introduces a general technique for reduction of non-determinism based on assumptions of ``monotonic safety'', which define a partial order between system states in terms of their probabilities of being safe. We exploit these assumptions to remove non-determinism from controller/plant models to drastically speed up probabilistic model checking and statistical model checking while providing provably conservative estimates as long as the safety is indeed monotonic. Our experiments demonstrate model-checking speed-ups of an order of magnitude while maintaining acceptable accuracy and require much less data for accurate estimates when running SMC --- even when monotonic safety does not perfectly hold and provable conservatism is not achieved.
    

\end{abstract}

\maketitle

\section{Introduction}
\label{sec:intro}

Model checking is a well-established way to assure a safety-critical CPS, such as a self-driving car, by searching for violations of a safety property in the reachable states of a system model. Recent advances in perception based on machine learning (ML) have challenged model checking with unpredictable, difficult-to-model behaviors~\cite{faria_non-determinism_2017,burton_making_2017,yampolskiy_predicting_2018}. The uncertainties of ML-based perception and the environment where it is deployed call for probabilistic model checking (PMC)~\cite{SegalaThesis,kwiatkowska_assume-guarantee_2010,Feng11}, which computes a probability that a property holds (e.g., the chance of no collisions). However, unpredictable outputs and qualitative perception errors~\cite{Chhokra20}, such as false negatives, can introduce complex behaviors which lead to scalability issues when performing PMC for CPS with ML components.


A common way of addressing these scalability issues is with abstractions, which simplify system models while preserving their properties of interest. For example, instead of treating a system's state space as continuous, one can grid it up into intervals, thus shrinking the size of the model and reducing the run time of probabilistic model checking. However, many abstraction techniques rely on non-determinism to ensure conservatism (e.g., one state grid cell can transition to multiple next grid cells on a given input). The model checker must then account for these non-deterministic choices, which then leads to new scalability issues.

SMC offers an alternative to PMC. SMC samples traces of a probabilistic model and uses statistical methods to obtain guarantees about the model's behavior. In the case of probabilistic models with non-determinism, this is done by employing lightweight scheduler sampling (LSS)~\cite{legay2014LSS}. However, LSS needs to separately sample both the probabilistic and non-deterministic behaviors of the system in question, which makes it very data hungry in the presence of large amounts of non-determinism.

To improve the scalability of PMC and data efficiency of LSS, this paper introduces a novel way to reduce non-determinism in the intermediate abstractions of models. This reduction is based on \emph{assumptions of monotonic safety} (MoS) that capture our intuition about which states have a higher chance of safety than others. Encoded as partial orders, these assumptions draw on domain-specific knowledge to characterize broad safety principles in a particular system. For instance, in an obstacle-avoidance scenario, being further away from an obstacle is generally safer than being closer to it. MoS assumptions can be used to simplify models. These simplifications intend to maintain conservative safety estimates and, when MoS assumptions hold, are provably conservative. Specifically, we simplify models using MoS-driven \emph{trimming} of non-deterministic transitions leading to MoS-safer states. 
This reduction in non-determinism speeds up PMC and improves the data efficiency of LSS.

Ultimately, MoS assumptions are heuristic and may not strictly hold in all situations: in rare cases it is worth being a little closer to the obstacle so that the perception is more accurate. Typically, it is not immediately clear where MoS holds or how to efficiently compute that MoS holds. However, we show empirically that MoS assumptions do not need to strictly hold in every state to be useful, as long as they hold most of the time.


We instantiate our MoS assumptions and MoS-driven trimming on two simulated case studies: automated braking for an autonomous vehicle and flow control of water tanks. Our MoS-driven trimming produces abstractions which estimate the safety probability of the system and are compared with standard conservative abstractions. Experiments show that our abstractions improve the scalability of PMC by an order of magnitude and the data efficiency of LSS by at least an order of magnitude. Moreover, we observe that our abstractions remain empirically conservative, even when not provably so: they do not overestimate the chance of safety in practice, even though MoS assumptions only hold in most of the states. 

This paper makes three contributions: (i) a novel notion of MoS assumptions and their use for transition trimming, (ii) conservatism proofs for transition trimming under MoS assumptions, (iii) an application of MoS-trimmed abstractions to emergency braking systems and flow control in water tanks, resulting in  high-performing instances of abstractions and quantifications of the MoS assumptions.

The rest of the paper is organized as follows. We introduce two complementary motivating systems in \Cref{sec:system} and give the necessary formal background in \Cref{sec:background}. \Cref{sec:assn} introduces MoS assumptions. \Cref{sec:approach} explains how we exploit MoS assumptions to trim non-determinism from models. We describe the results of our two case studies in \Cref{sec:evaluation}. The paper ends with related work (\Cref{sec:related}) and a conclusion (\Cref{sec:discussion}).

\section{Motivating Systems}
\label{sec:system}

Throughout the paper, we consider two systems with complementary complexities. The first one, \emph{emergency braking}, has a stateless multi-output controller, a stateful binary-output perception, and a total-order MoS in each state dimension. The second one, \emph{water tanks}, has a stateful binary-output controller, a stateless multi-output perception, and a U-shaped MoS assumption. 

\subsection{First System: Emergency Braking} 

Consider an autonomous car approaching a stationary obstacle. The car is equipped with a controller which issues braking commands on detection of the obstacle and an ML-based perception system which detects obstacles. The safety goal of the car is to fully stop before hitting the obstacle. Following are the car dynamics, controller, and perception used in the paper.

A discrete kinematics model with time step \timestep represents the velocity ($v$) and position ($d$, same as the distance to the obstacle) of the car at time $t$:
\begin{equation} \label{eq:aebs-dyn}
    d[t] = d[t-\timestep] - \timestep \times v[t-\timestep], \quad
    v[t] = v[t-\timestep] - \timestep \times b[t-\timestep],
\end{equation}
where $b[t]$ is the braking command (a.k.a. the ``braking power'', BP) at time $t$.

We use an Advanced Emergency Braking System (AEBS)~\cite{aebs} which uses two metrics to determine the BP: time to collide ($TTC$) and warning index ($WI$). The $TTC$ is the amount of time until a collision if the current velocity is maintained. The $WI$ represents how safe the car would be in the hands of a human driver (positive is safe, negative is unsafe).
\begin{equation}
    TTC = \frac{d}{v},\quad WI = \frac{d-d_{br}}{ v \cdot T_{h}}, \quad d_{br} = v\cdot T_{s}+ \frac{u\cdot v^2}{2a_{max}}, 
\end{equation}
where $d_{br}$ is the braking-critical distance, $T_{s}$ is the system response delay (negligible), $T_{h}$ is the average driver reaction time (set to 2s), $u$ is the friction scaling coefficient (taken as 1), and $a_{max}$ is the maximum deceleration of the car.

Upon detecting the obstacle, the AEBS chooses one of three BPs: no braking (BP of 0), light braking ($B_1$), and maximum braking ($B_2=a_{max}$). If the obstacle is not detected, no braking occurs. The  BP value, $b$, is determined by $WI$ and $TTC$ crossing either none, one, or both of the fixed thresholds $C_1$ and $C_2$:
\begin{align*}
   WI >C_1 \land TTC > C_2 \implies & b = 0 \\
    (WI \leq C_1 \land  TTC > C_2) \lor (WI > C_1 \land  TTC \leq C_2) \implies & b = B_1  \\
    WI \leq C_1 \land  TTC \leq C_2 \implies & b = B_2
\end{align*}

To detect the obstacle, the car uses the deep neural network YoloNetv3~\cite{yolonet},
as it can run at high frequencies~\cite{why_yolo}. We say a low-level detection of the obstacle occurs when Yolo detects the obstacle. To reduce noise, we apply a majority vote filter to the low-level detections, so the AEBS receives a high-level detection with the distance to the obstacle if at least 2 of the past 3 Yolo outputs were low-level detections. We model Yolo probabilistically based on two observations. First, Yolo's chance of detection is higher when the car is closer to the obstacle. Second, consecutive low-level detections correlate with each other due to weather conditions and similar car positions. Our perception model conditions the detection chance on the distance from the obstacle and the recent detection history.

The safety property of interest, \oursafeprop, is the absence of a collision, specified in linear temporal logic 
(LTL)~\cite{pnueli_temporal_1977} as 
\begin{equation}
    \oursafeprop :=  \square \; \left ( d > \carlen \right ),
\label{eq:aebs_safety_prop}
\end{equation}
where \carlen is the minimum allowed distance to the obstacle and is taken to be 5m. Note that in this model the car eventually stops or collides, so for any initial condition there is an upper bound on the number of time steps.

\subsection{Second System: Water Tanks}
Consider a system consisting of \tankcount water tanks, each of size \tanksize, draining over time and a controller that maintains some water level in each tank. With $\wl_i[t]$ as the water level in the $i^{th}$ tank at time $t$, the discrete time dynamics for the water level in the tank at the next time step is given by:
\begin{equation}
    \wl_i[t+1] = \wl_i[t] - \outflow_i[t] +  \inflow_i[t],
\end{equation}
where $\inflow_i[t]$ and $\outflow_i[t]$ are the amounts of water entering (``inflow'') and leaving (``outflow'') respectively the $i^{th}$ tank at time $t$. The inflow is determined by the controller and the outflow is a constant determined by the environment.

Each tank is equipped with ML-based perception to report its current perceived water level, \wlp, which is a random function of the true current water level, \wl. In the simulated system (playing the role of the reality), \wlp is determined as a positively-skewed high-variance mixture-of-gaussians error added to \wl. The gaussians have higher variance for water levels in the middle of the tank and smaller variance near the edges of the tank. In addition, with constant probability the perception can output \wlp=0 or \wlp=\tanksize (to account for a large range of perception errors). 

We model perception probabilistically as a categorical distribution of perception error, \err. The possible error values are integers within some \errwin-sized error window, 
$$-\errwin,\dots, -1,  0, 1, \dots, \errwin,$$ 
plus two special values to capture unexpected outputs of ML-based perception (e.g., due to water reflections or deep network peculiarities): $\fulltank$ to indicate that a spurious reading of the tank being full occurred (then $\wlp=\tanksize$), and $\emptytank$ to indicate that a spurious reading of the tank being empty occurred (then $\wlp=0$). Hence, this perception model has $2\errwin + 3$ parameters.   

The controller has a single source of water to fill one tank at a time (or none at all) based on the perceived water levels. Then this tank receives a constant value $\inflow>0$ of water, whereas the other tanks receive 0 water. The controller only starts filling a tank that has water below the lower decision threshold \lct and stops filling after the water reaches the upper threshold \uct. After filling a tank or taking one step of not filling, if the controller perceives multiple tanks below \lct, it starts filling the one with the least perceived water (or, if equal, with the lowest ID 1\dots\tankcount). 

The safety property, \tanksafeprop, is that each tank must never run out of water (i.e., $\wl = 0$) or overflow (i.e., $\wl = \tanksize$).  This property can be expressed as the following time-bounded LTL formula:
\begin{equation}
\label{eq:tank_safety_prop}
     \tanksafeprop :=  \square_\timebound \; \left (0 < \wl_1 < \tanksize \land \dots \land 0 < \wl_\tankcount < \tanksize \right )
\end{equation}
where \timebound is a fixed number of time steps within which the tanks should be kept is the water level resulting in overflow of tanks.

\section{Background, System Models, and LSS}
\label{sec:background}


In the following \Cref{def:pa,def:parcomp,def:prob}, borrowed from Kwiatkowska et al. \cite{Kwiatkowska2013}, we use $Dist(S)$ to refer to the set of probability distributions over a set $S$, $\eta_s$ as the distribution with all its probability mass on $s \in S$, and $\mu_1 \times \mu_2$ to be the product distribution of $\mu_1$ and $\mu_2$.
\begin{definition} \label{def:pa}
A \emph{probabilistic automaton} 
(PA) is a tuple $\mdl=(S,\bar{s},\alpha,\delta,L)$, where $S$ is a finite set of states, $\bar{s}\in S$ is the initial state, $\alpha$ is an alphabet of action labels, $(S,\alpha,Dist(S)) \in \delta$ is a probabilistic transition relation, and $L: S \rightarrow 2^{AP}$ is a labeling function from states to sets of atomic propositions from the set AP. 
\end{definition} 
If $(s,a,\mu)\in \delta$ then the PA can make a transition in state $s$ with action label $a$ and move based on distribution $\mu$ to state $s'$ with probability $\mu(s')$, which is denoted by $s \xrightarrow{a} \mu$. If $(s,a,\eta_{s'})\in \delta$ then we say the PA can transition from state $s$ to state $s'$ via action $a$. A state $s$ is terminal if no elements of $\delta$ contain $s$. A path in $M$ is a finite/infinite sequence of transitions $\ourpath = s_0 \xrightarrow{a_0,\mu_{0}} s_1 \xrightarrow{a_1, \mu_{1}}\hdots$ with $s_0=\bar{s}$ and $\mu_{i}(s_{i+1})>0$.
A set of paths is denoted as $\pset$. We use $\mdl(s)$ to denote the PA $\mdl$ with initial state $s$.


Reasoning about PAs also requires the notion of \textit{schedulers}, which resolve the non-determinism during an execution of a PA. For our purposes, a scheduler $\sigma$ maps each state of the PA to an available action label in that state. We use $\pset_{\mdl}^{\sigma}$ for the set of all paths through $\mdl$ when controlled by scheduler $\sigma$ and $Sch_{\mdl}$ for the set of all schedulers for $\mdl$. Finally, given a scheduler $\sigma$, we define a probability space $Pr_{\mdl}^{\sigma}$ over the set of paths $\pset_{\mdl}^{\sigma}$ in the standard manner. 

Given PAs $\mdl_1$ and $\mdl_2$, we define parallel composition as:

\begin{definition}\label{def:parcomp}
    The \emph{parallel composition}    of PAs $\mdl_1=(S_1,\bar{s}_1,\alpha_1,\delta_1,L_1)$ and $\mdl_2=(S_2,\bar{s}_2,\alpha_2,\delta_2,L_2)$ is given by the PA $\mdl_1||\mdl_2=(S_1 \times S_2, (\bar{s}_1,\bar{s}_2),\alpha_1 \cup \alpha_2, \delta,L)$, where $L(s_1,s_2)=L_1(s_1)\cup L_2(s_2)$ and $\delta$ is such that $(s_1,s_2) \xrightarrow{a} \mu_1 \times \mu_2$ iff one of the following holds: (i) $s_1\xrightarrow{a}\mu_1,s_2\xrightarrow{a}\mu_2$ and $a \in \alpha_1 \cap \alpha_2$, (ii) $s_1 \xrightarrow{a}\mu_1, \mu_2=\eta_{s_2}$ and $a \in (\alpha_1 \setminus \alpha_2)$, (iii) $\mu_1=\eta_{s_1},s_2\xrightarrow{a}\mu_2$ and $a \in (\alpha_2 \setminus \alpha_1)$.
\end{definition}

In this paper, we are concerned with probabilities of safety properties, which we state using LTL formulas over state labels.

\begin{definition}\label{def:prob}
For LTL formula $
\psi$, PA $\mdl$, and scheduler $\sigma \in Sch_{\mdl}$, the \emph{probability of $\psi$ holding} is:
\begin{align*}
    Pr_{\mdl}^{\sigma}(\psi) \coloneqq Pr_{\mdl}^{\sigma}\{ \pi \in \pset_{\mdl}^{\sigma}~|~\pi \models \psi \}
\end{align*}
where $\pi \models \psi$ indicates that the path $\pi$ satisfies $\psi$ in the standard LTL semantics~\cite{pnueli_temporal_1977}. We specifically consider LTL safety properties, which are LTL specifications that can be falsified by a finite trace though a model. Both \oursafeprop and \tanksafeprop are LTL safety properties.
\end{definition}

Probabilistically verifying an LTL formula $\psi$ against $M$ requires checking that the probability of satisfying $\psi$ meets a probability bound for all schedulers. This involves computing the minimum or maximum probability of satisfying $\psi$ over all schedulers:
\begin{align*}
    Pr_{\mdl}^{min}(\psi) &\coloneqq \operatorname{inf}_{\sigma\in Sch_{\mdl}} Pr_{\mdl}^{\sigma}(\psi) \\ Pr_{\mdl}^{max}(\psi) &\coloneqq \operatorname{sup}_{\sigma\in Sch_{\mdl}} Pr_{\mdl}^{\sigma}(\psi)
\end{align*}

We call $\sigma$ a min scheduler of \mdl if $Pr_{\mdl}^{\sigma}(\psi) = Pr_{\mdl}^{min}(\psi)$. We use $Sch_{\mdl}^{min}$ to denote the set of min schedulers of \mdl.

\subsection{Probabilistic Automata for Motivating Systems}
\label{sec:modelBackground}

\begin{figure}
\centering
  \includegraphics[width=0.8\linewidth]{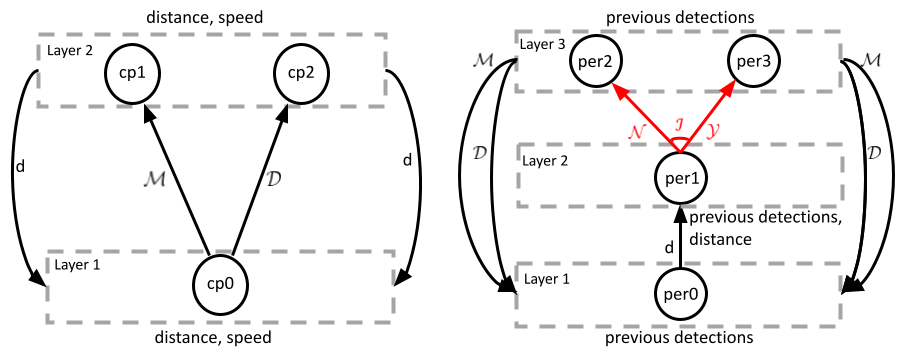}
  \caption{\small The models of controller/plant \mcp (left) and perception \mper (right) in the emergency braking system. The dashed grey boxes indicate layers of states, with one state for each pair of distances and speeds. The red transition is probabilistic. }
\label{fig:McpMper}
\vspace*{-3mm}
\end{figure}


We now describe how we construct the baseline PAs for the systems described in \Cref{sec:system}. We always start with two models: a non-probabilistic, discrete time, infinite state controller-plant model \mcp and a probabilistic perception model \mper. Our goal is to abstract them as PAs and compose them so that we can analyze their safety properties. As an illustration, we discuss our models for the emergency braking system, shown in \Cref{fig:McpMper}. \mper probabilistically generates either detections or non-detections of the obstacle and sends them to \mcp, which then selects a braking command in response to the detection/non-detection and updates its (continuous) distance and speed values accordingly. \mcp then sends the distance to the obstacle back to \mper, which it uses to alter its detection/non-detection probabilities. We now describe how we construct \mper as a PA before describing how to form a standard PA abstraction of \mcp. Note that \mcp is not a PA since it has an infinite number of states.


\mper has three actions --- \yesdet (detection), \nodet (non-detection), \emph{distance=d} (\emitdist), all of which are synchronized with \mcp. Colloquially, we say that \mper transmits actions \yesdet and \nodet to \mcp and \mcp transmits parameterized action \emitdist to \mper. \mper conditions its detection probabilities on both the distance \emitdist and its last $W$ low-level readings and uses a majority-vote filter over its last $N_{F}$ low-level readings to determine its output of either \yesdet or \nodet, as described in \Cref{sec:system}. The choice of low-level detection is the only probabilistic transition in \mper (and the whole system). \mper waits for \mcp to transmit its distance. Then it makes an internal detection and checks if at least half of the previous $N_{F}$ low-level readings were detections. If so, it transmits \yesdet. If not, it transmits \nodet. Either way, it then waits for \mcp to transmit the new distance \emitdist.

\begin{figure}
\centering
\begin{subfigure}{0.23\textwidth}
  \centering
  \includegraphics[width=1\linewidth]{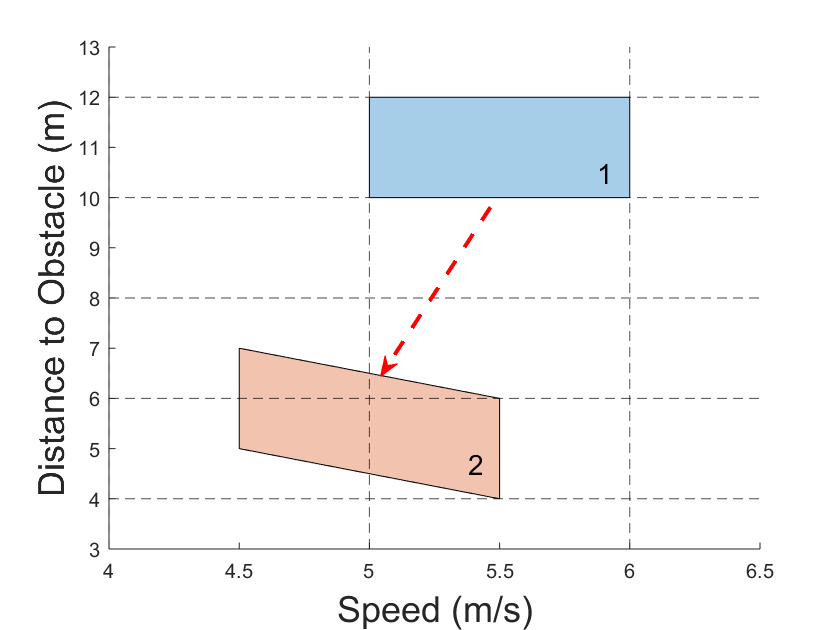}
  \caption{}
  \label{fig:mint_construction_1}
\end{subfigure}%
\begin{subfigure}{0.23\textwidth}
  \centering
  \includegraphics[width=1\linewidth]{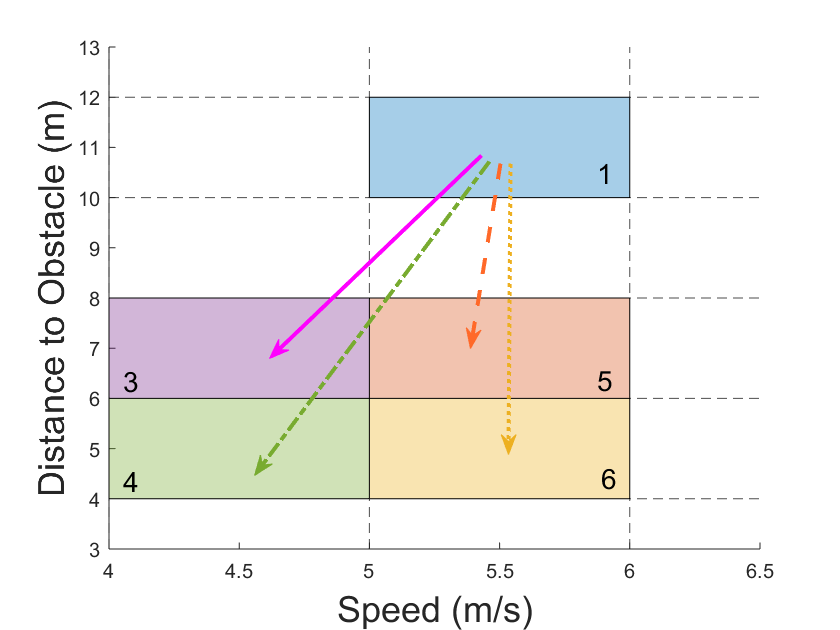}
  \caption{}
  \label{fig:mint_construction_2}
\end{subfigure}
\caption{\small In \Cref{fig:mint_construction_1} a set of distance and speed states (box 1) are grouped into a single entity and the set of possible next states (polygon 2) is computed. In \Cref{fig:mint_construction_2} the set of next states is used to compute the new transitions from box 1, which go to boxes 3-6.}
\label{fig:mint_construction}
\vspace*{-3mm}
\end{figure}

We now describe \mcp and then create PA abstraction \mba. Let $S \subset \mathbb{R}^n$ be the state space of \mcp, $\bar{s}$ be the initial state, $K=\{ \yesdet, \nodet\}$ be the set of \mper inputs, and $f: S \times K \rightarrow S$ be the discrete transition relation (i.e. if \mcp is in state $s\in S$ and receives high-level detection $k \in K$ from \mper, then at the next time point it will be in state $f(s,k)$).

The state space of \mba, denoted $S'$, is formed by dividing $S$ into a discrete, finite set of regions using equally sized intervals\footnote{The size of these intervals is a modeling hyperparameter. We explore how it affects the abstractions in \Cref{sec:evaluation}.} (see  the squares formed by the dashed lines in \Cref{fig:mint_construction_1}). So every $s_1' \in S'$ has a corresponding region $S_1 \subset S$. The initial state $\bar{s}'$ is the state in $S'$ which contains $\bar{s}$ in its set of corresponding states. The set of \mper inputs $K=\{ \yesdet, \nodet\}$ is unchanged. The transition relation $\delta'$ is formed as follows:
\begin{align}
    &(s_1',(k \times \alpha'),\eta_{s_2'}) \in \delta' \; \text{if} \\ 
    &\exists s_2' \in S', \; \exists s_1 \in S_1, s_2 \in S_2, \;  s.t. f(s_1,k)=s_2 \nonumber
\end{align}
In other words, \mba has a transition from $s_1'$ to $s_2'$ (in \mba) if at least one state in $S_1$ has a transition to a state in $S_2$ (in \mcp). For an illustration of this, see \Cref{fig:mint_construction}. Note that $\delta'$ is non-deterministic due to the perception values $K$ and the reachability analysis over the intervals of states in \mcp. The latter of two types of non-determinism is captured by the extra action labels $\alpha' \in A'$. Each instance of the reachability non-determinism gets its own unique action label from $A'$ (see the different arrows in \Cref{fig:mint_construction_2}) and the perception non-determinism will be resolved when the model gets composed with \mper. Whenever \mba makes a transition, it transmits the distance range corresponding to its next state $s_2'$ for \mper to synchronize on. The discrete states $s_i' \in S'$ inherit any atomic labels $l\in L$ from the states in $S_i$, leading to labelling function $L'$. Finally, we convert \mba into a PA using the tuple notation from definition \Cref{def:pa}: $\mba = \{ S',s_0',K\times A',\delta',L'\}$.

We now form our baseline model, which we also denote as \mba, by taking $\mper || \mba$. Note that the resulting model has non-determinism in its transition relation, which our MoS assumptions will aim to remove.

For the water tank system, the \mper and \mba models are constructed from equations in \Cref{sec:system} in a similar fashion: \mper encodes the perception probabilistically and \mba is a PA based interval abstraction of \mcp encoding the controller and plant non-probabilistically. \mba communicates the true water levels $(\wl_1 \dots \wl_\tankcount)$ to \mper and receives the perceived water levels $(\wlp_1 \dots \wlp_\tankcount)$. The execution terminates either when the safety property is violated or time \timebound is reached. 

\subsection{Statistical Model Checking for PAs}
\label{sec:LSS}

This section formalizes the process of running LSS on PA \mdl to approximate $Pr_{\mdl}^{min}(\psi)$. First, a set of schedulers are sampled uniformly from $Sch_{\mdl}$:
\begin{align*}
    \sigma_1, \hdots, \sigma_n \sim Unif(Sch_{\mdl})
\end{align*}

Then the probability of $\mdl$ under scheduler $\sigma_i$, which we denote as $p_i \coloneqq Pr_{\mdl}^{\sigma_i}(\psi)$, is computed via sampling traces from the fully probabilistic model of \mdl when scheduler $\sigma_i$ is fixed. Finally, the smallest probability is returned: $Pr_{\mdl}^{LSS,n}(\psi)=min(p_1,\hdots,p_n)$.


Since LSS relies on sampling schedulers, its output is a random variable. So consecutive runs of LSS can give different results. In addition, running LSS requires $n$ iterations of standard statistical analysis of \mdl with fixed schedulers. Reducing $n$ lowers the run times of LSS, but will also give less conservative results on average. The ideal choice of $n$ depends both on the computational resources at hand and the estimated size of the set of schedulers $|Sch_{\mdl}|$. LSS will never output a smaller value than PMC, so $Pr_{\mdl}^{LSS,n}(\psi) \geq Pr_{\mdl}^{min}(\psi)$. However, as $n \rightarrow \infty$, $P\left(Pr_{\mdl}^{LSS,n}(\psi) = Pr_{\mdl}^{min}(\psi)\right)\rightarrow 1$, which is to say that with enough samples LSS will eventually find the smallest satisfaction probability. In practice $|Sch_{\mdl}|$ will be extremely large, thus requiring very large $n$ to obtain reasonable safety estimates.



\begin{remark}
    For a complete description of LSS, see \cite{legay2014LSS,DArgenio2015SmartLSS}.
\end{remark}
\section{Assumptions of Monotonic Safety}\label{sec:assn}

This section starts by formalizing assumptions of \emph{monotonic safety} (MoS) for both PMC and LSS\footnote{The two techniques require different MoS assumptions because they handle non-determinism differently.} and stating the intuitive MoS assumptions for the motivating systems. Then we observe that MoS does not always hold in practice, exemplified in our motivating systems. 

\begin{definition}[Assumption of Monotonic Safety for PMC]\label{def:mosPMC}
An \emph{MoS assumption for PMC} is a partial order, \mosa, over the states $S$ with respect to a given safety property $\psi$ and a PA \mdl. It stipulates that if two states $s_1, s_2 \in S$ are ordered, $s_1 \mosa s_2$, then for every min scheduler $\sigma$ of \mdl, the probability of \mdl satisfying $\psi$ when using scheduler $\sigma$ and starting from state $s_1$ is greater or equal to the probability of \mdl satisfying $\psi$ when using scheduler $\sigma$ and starting from state $s_2$:
$$ s_1 \mosa s_2 \implies \forall \sigma \in Sch_{\mdl}^{min} \; \; Pr_{\mdl(s_1)}^{\sigma}(\psi) \geq Pr_{\mdl(s_2)}^{\sigma}(\psi) $$
\end{definition}

\begin{definition}[Assumption of Monotonic Safety for LSS]\label{def:mosLSS}
An \emph{MoS assumption for LSS} is a partial order, \mosaSMC, over the states $S$ with respect to a given safety property $\psi$ and PA \mdl. It stipulates that if two states $s_1, s_2 \in S$ are ordered, $s_1 \mosaSMC s_2$, then for every scheduler $\sigma$ of \mdl, the probability of \mdl satisfying $\psi$ when using scheduler $\sigma$ and starting from state $s_1$ is greater or equal to the probability of \mdl satisfying $\psi$ when using scheduler $\sigma$ and starting from state $s_2$:
$$ s_1 \mosaSMC s_2 \implies \forall \sigma \in Sch_{\mdl} \; \; Pr_{\mdl(s_1)}^{\sigma}(\psi) \geq Pr_{\mdl(s_2)}^{\sigma}(\psi) $$
\end{definition}

These definitions are very general. The partial orders could relate exactly two states in $S$, or they could order some subset of states in $S$, or they could even be a full ordering over states in $S$.

Such assumptions are to be made by the engineers familiar with the domain and the system at hand. By formalizing an MoS assumption, an engineer aims to integrate their high-level knowledge of the system's operation into the formal analyses of this system. Several assumptions can be made for a given system, and if they do not contradict each other they can be combined into a single overarching partial order.  

One way to find candidates for MoS assumptions is to examine the robustness, in terms of Signal Temporal Logic (STL)~\cite{stlrobustness}, of the safety property $\psi$. Intuitively, if state $s_1$ is more STL-robust with respect to $\psi$ than state $s_2$, then it may have a higher chance of remaining safe in the subsequent execution. So one can formulate an MoS partial order based on the direction of STL robustness. However, MoS assumptions can draw on substantially broader intuitions than what can be gleaned from the safety property.

For our case studies, we consider two intuitive MoS assumptions for our emergency braking system and its safety property \oursafeprop, and one intuitive MoS rule for our water tank system and its safety property \tanksafeprop. In all three rules, consider model $\mdl$ with different initial states $s_1$ and $s_2$:
\begin{itemize}
\item Distance MoS \mosdist: \emph{``Increasing the distance to the obstacle does not reduce the safety chance''}. That is, if $\mdl_1$ starts further away, i.e., $d_{s_1} \geq d_{s_2}$, then it is safer than $\mdl_2$:  $Pr_{\mdl(s_1)}(\oursafeprop) \geq Pr_{\mdl(s_2)}(\oursafeprop)$.

\item Speed MoS \mosspeed:\emph{``Decreasing the velocity does not reduce the safety chance''}. That is, if $\mdl_1$ starts slower, i.e., $v_{s_1} \leq v_{s_2}$, then it is safer than $\mdl_2$:  $Pr_{\mdl(s_1)}(\oursafeprop) \geq Pr_{\mdl(s_2)}(\oursafeprop)$.
\item Water level MoS \moswl: \emph{``Moving the water level in any tank towards its middle does not reduce safety chance''}.
This partial order has two branches, for each half of the tank. First, if $\wl_{s_2} \geq \wl_{s_1} \geq \frac{\tanksize}{2}$, then $P_{\mdl(s_1)}(\tanksafeprop) \geq P_{\mdl(s_2)}(\tanksafeprop)$. Second, if $\wl_{s_2} \leq \wl_{s_1}  \leq \frac{\tanksize}{2}$, then $P_{\mdl(s_1)}(\tanksafeprop) \geq P_{\mdl(s_2)}(\tanksafeprop)$.
\end{itemize}
Note that assumptions \mosdist and \moswl order states in the higher-robustness direction of \oursafeprop and \tanksafeprop respectively. But the second AEBS assumption, \mosspeed, is an example of a more general rule not tied to \oursafeprop. It was discovered by observing that higher speeds lead to greater distance losses, and if greater distances are to be considered safer, so should be the lower speeds. Before describing how to exploit MoS assumptions to reduce non-determinism in PAs in \Cref{sec:approach}, we present some simple counter examples showing that \mosdist, \mosspeed, and \moswl do not always holds.

\subsection{Violations of Monotonic Safety in the Motivating Systems}\label{sec:cexs}

For our systems, the introduced MoS assumptions rules appear natural and intuitive: to avoid a collision, being further away and driving slower seems safer; to avoid overflowing and underflowing, it appears safer to keep the tanks half-full. As we will experimentally show in \Cref{sec:evaluation}, these intuitions are mostly correct. However, our assumptions may not hold in some pairs of states, which we now demonstrate on simplified models. The proofs of these counterexamples are provided in \Cref{app:cexsAEBS} and \Cref{app:cexsWT}.

When discussing MoS scenarios in concrete systems, we consider \emph{state shifts} encoded as changes to states with a state vector $\Delta$. AEBS has two-dimensional states $(d, v)$ and, hence, two types of shifts corresponding to \mosdist and \mosspeed: $(\Delta_d > 0,0)$ and $(0, \Delta_v < 0)$. Each water tank is unidimensional, so we consider scalar shifts $\Delta_\wl$, such that $\Delta_\wl>0$ iff  $\wl \leq \frac{\tanksize}{2}$; otherwise, $\Delta_\wl<0$ . 

\begin{definition}\label{def:dist-indep}
An AEBS system has \emph{distance-independent perception} if the detection chance is constant: $\forall d, \detprob(d) = q$. Otherwise, it is \emph{distance-dependent}. 
\end{definition}

\begin{definition}\label{def:onebp}
An AEBS system  has \emph{one BP} if the controller bounds in \Cref{sec:system} are infinite: $C_1 = C_2 = +\infty$. Otherwise, it has \emph{multiple BPs}.  
\end{definition}

With two counterexamples for AEBS, we demonstrate that multiple BPs or distance-dependent perception alone are sufficient to violate assumptions \mosdist and \mosspeed. In both AEBS counterexamples, we set the time step $\timestep = 1s$ and the filter window size $N_{F}=1$.

The first counterexample shows that starting closer to the obstacle may be safer due to higher detection chances at closer distances. 

\begin{counterexample}\label{th:ce1}
Assumption \mosdist does not hold in the AEBS with one BP and distance-dependent perception for shift $\Delta = (1 m, 0)$ when $B = 10 m/s^2, v_0 = 11 m/s, d_0 = 13m,$ and $\detprob(d) = 1 - \lceil d\rceil/20$.
\end{counterexample}

The second counterexample shows that going at a faster speed may be beneficial because it leads to closer distances, which result in stronger braking.

\begin{counterexample}\label{th:ce2}
Assumption \mosspeed does not hold in the AEBS  with multiple BPs and distance-independent perception for shift $\Delta = (0, -1m/s)$ when $B(d) = [ 10 m/s^2\text{ if } d\leq 11m; 3 m/s^2\text{ otherwise}]$, $v_0 = 9 m/s$, $d_0 = 20m,$ and $\detprob = 0.5$.
\end{counterexample}

Intuitively, the above counterexamples arise when, due to distance or speed shifting, the model crosses a boundary into a different braking mode or a detection bin with a different probability.

\medskip

In the water tank system, the safest state is almost never $\frac{\tanksize}{2}$ because of the asymmetry in \inflow vs. \outflow, controller choices, perception errors. Therefore, it is straightforward to upset the safety balance by picking disproportionate values for inflow and outflow.

\begin{definition}\label{def:randper}
A water tank system  has \emph{random perception} if \errwin = 0 and $\forall i, \wlp_i \in \{ \emptytank, \fulltank \}$.
\end{definition}

\begin{counterexample}\label{th:ce-tank}
Assumption \moswl does not hold for a water tank system with $\tankcount = 1, \tanksize = 100, \outflow=3, \inflow=40, \timebound = 4, Pr(\wlp = 0) = 0.4,$ and $Pr(\wlp = \tanksize) = 0.6, \wl=10,$ and $\Delta_\wl = 30$. 
\end{counterexample}

\section{MoS-based Trimming}
\label{sec:approach}

This section describes how we use the MoS assumptions from \Cref{sec:assn} to remove non-determinism from PAs to speed up PMC and LSS. The two trimming procedures are slightly different, due to the differences in how PMC and LSS handle non-determinism. 
Finally, recall that our MoS trimmmings are only applied to the non-probabilistic transitions of the \mba models.




\subsection{MoS Abstraction for Model Checking}
\label{sec:MoSForMC}

We describe how to remove non-determinism from a single state as follows:

\begin{definition}[Single State PMC Transition Trimming]\label{def:MoSMCSingleStateTrimming}
    Let PA $\mdl = \{ S,\bar{s},\alpha,\delta,L \}$ with MoS assumption \mosa be given. Consider $s\in S$ and actions $\alpha_1,\alpha_2 \in \alpha$. If there exist states $s_1$ and $s_2$ such that  $(s,\alpha_1,\eta_{s_1}),(s,\alpha_2,\eta_{s_2})\in \delta$ and $s_1 \mosa s_2$, then we return PA $\mdl'$, which is identical to \mdl except that it does not have $(s,\alpha_1,\eta_{s_1})$ in its transition relation.
\end{definition}

\begin{remark}
    Another way to think about this trimming is that we remove every scheduler that picks $\alpha_1$ in state $s$ from $Sch_{\mdl'}$.
\end{remark}


We now prove that under the MoS assumption from \Cref{def:mosPMC} each application of the trimming procedure in \Cref{def:MoSMCSingleStateTrimming} is conservative. First, we introduce the following lemma which states that the probability of $\mdl$ satisfying safety property $\psi$ can be decomposed by reasoning about the paths of $\mdl$ which do and do not pass through some state $s$. The proof can be found in \Cref{lem:traceDecompProof}.


\begin{lemma}[Satisfaction Probability by Trace Decomposition]\label{lemma:traceDecomp}
Let $\mdl$ be a PA. Assume that there are no infinite paths through \mdl. Let $\psi$ be an LTL safety property. Let $\sigma$ be a scheduler of $\mdl$ and let $s$ be a state of \mdl. Then 
\begin{align*}
    Pr_{\mdl}^{\sigma}(\psi) &=
    Pr_{\mdl}^{\sigma}(\psi \& \neg \diamond s) + Pr_{\mdl}^{\sigma}(\psi \& \diamond s) \\
    &= Pr_{\mdl}^{\sigma}(\psi \& \neg \diamond s) + Pr_{\mdl}^{\sigma}(\diamond s) Pr_{\mdl(s)}^{\sigma}(\psi)
\end{align*}
\end{lemma}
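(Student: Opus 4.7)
The plan is to prove the two equalities in turn. The first uses the law of total probability over a reachability partition; the second invokes a memoryless-scheduler analog of the strong Markov property together with the prefix structure of safety properties.

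For the first equality, I would observe that the events $\{\diamond s\}$ and $\{\neg \diamond s\}$ form a partition of $\pset_{\mdl}^{\sigma}$. Intersecting each with $\{\pi \models \psi\}$ gives two disjoint events whose union is $\{\pi \models \psi\}$, and finite additivity of $Pr_{\mdl}^{\sigma}$ yields the sum. The ``no infinite paths'' hypothesis ensures the probability space is well-defined on finite paths, so this step is routine.

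For the second equality, I would split each path in $\diamond s$ at the first-visit time to $s$ into a prefix ending in $s$ and a suffix starting from $s$. Because $\sigma$ is memoryless --- it maps states to action labels --- the conditional distribution of the suffix given any particular prefix coincides with the distribution over $\pset_{\mdl(s)}^{\sigma}$. Summing over all finite prefixes that first hit $s$, weighted by their probabilities, yields $Pr_{\mdl}^{\sigma}(\diamond s) \cdot Pr_{\mdl(s)}^{\sigma}(\psi)$ for the event ``$s$ is reached and the suffix from $s$ satisfies $\psi$''. Since $\psi$ is a safety property, ``prefix-plus-suffix satisfies $\psi$'' reduces to ``the suffix satisfies $\psi$'' whenever the prefix itself contains no bad finite extension of $\psi$.

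The main obstacle is exactly that proviso: a path could in principle reach $s$ only after already violating $\psi$, in which case $\{\psi \wedge \diamond s\}$ would be strictly smaller than $\{\diamond s \wedge \text{suffix from } s \text{ satisfies } \psi\}$ and the neat product decomposition would break. To handle this, I would make explicit the termination convention implicit in the $\mba$ constructions of \Cref{sec:modelBackground}: executions halt upon a safety violation (collision in AEBS, overflow or underflow in the tanks). Under this convention every path in $\diamond s$ has a $\psi$-safe prefix, so the two events coincide and the product factorization goes through. This convention is also consistent with the ``no infinite paths'' hypothesis: even when $\psi$ is not violated, the system still reaches a terminal state (a fully stopped car, or the time horizon $\timebound$ in the tank model) in finitely many steps, keeping every sum over prefixes absolutely convergent.
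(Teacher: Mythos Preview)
Your proposal is correct and follows essentially the same route as the paper's own proof. Both partition paths by whether they visit $s$, then invoke the convention that safety-violating states are terminal (sinks) to conclude that any path reaching $s$ has a safe prefix, after which the conditional probability factors as $Pr_{\mdl(s)}^{\sigma}(\psi)$; the paper phrases the second step via the chain rule $Pr_{\mdl}^{\sigma}(\psi \wedge \diamond s) = Pr_{\mdl}^{\sigma}(\diamond s)\,Pr_{\mdl}^{\sigma}(\psi \mid \diamond s)$ rather than an explicit first-visit/memoryless decomposition, but the content is identical.
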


We now state the single state trimming conservatism result and its proof.

\begin{theorem}[PMC MoS and PMC Transition Trimming Implies PMC Conservatism]\label{thm:MCCons}
Let $\mdl$ $= \{ S,\bar{s},\alpha,\delta,L \}$ be a PA with MoS assumption \mosa. Assume that there are no infinite paths through \mdl. Let $\psi$ be an LTL safety property. If we trim $\mdl$ according to the procedure detailed in \Cref{def:MoSMCSingleStateTrimming} for state $s$ to arrive at model $\mdl'$, 
then $Pr_{\mdl}^{min}(\psi) = Pr_{\mdl'}^{min}(\psi)$.
\end{theorem}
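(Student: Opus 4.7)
The plan is to show the equality by proving two inequalities. The easy direction, $Pr_{\mdl'}^{min}(\psi) \geq Pr_{\mdl}^{min}(\psi)$, is immediate from the observation that trimming only removes a scheduler option, so $Sch_{\mdl'} \subseteq Sch_{\mdl}$ and the infimum over the smaller set cannot decrease. The bulk of the work is the reverse direction, which I would tackle by showing that for every min scheduler $\sigma$ of $\mdl$ there exists a scheduler in $Sch_{\mdl'}$ achieving the same probability.

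To that end, first I would pick any $\sigma \in Sch_{\mdl}^{min}$. If $\sigma(s) \neq \alpha_1$, then $\sigma$ is already a scheduler of $\mdl'$ with unchanged probability and we are done. Otherwise, I would construct $\sigma'$ that agrees with $\sigma$ everywhere except at $s$, where $\sigma'(s) = \alpha_2$, and then argue $Pr_{\mdl}^{\sigma'}(\psi) \leq Pr_{\mdl}^{\sigma}(\psi)$. Combined with the minimality of $\sigma$, this forces equality, so $\sigma'$ is also a min scheduler; since $\sigma' \in Sch_{\mdl'}$, we obtain $Pr_{\mdl'}^{min}(\psi) \leq Pr_{\mdl}^{\sigma'}(\psi) = Pr_{\mdl}^{min}(\psi)$.

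The key step is the comparison $Pr_{\mdl}^{\sigma'}(\psi) \leq Pr_{\mdl}^{\sigma}(\psi)$, which I would establish by applying \Cref{lemma:traceDecomp} at state $s$ to both $\sigma$ and $\sigma'$:
\begin{equation*}
Pr_{\mdl}^{\tau}(\psi) = Pr_{\mdl}^{\tau}(\psi \,\&\, \neg \diamond s) + Pr_{\mdl}^{\tau}(\diamond s)\cdot Pr_{\mdl(s)}^{\tau}(\psi), \quad \tau \in \{\sigma, \sigma'\}.
\end{equation*}
The assumption of no infinite paths, together with finiteness of the state set, forces the induced state graph to be acyclic, so $s$ is visited at most once on any path. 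Hence (i) the behavior of $\sigma$ and $\sigma'$ on traces that avoid $s$ is identical, making the first summand equal across $\sigma$ and $\sigma'$; (ii) the hitting probability $Pr(\diamond s)$ is likewise equal since it depends only on the pre-$s$ behavior; and (iii) on traces past $s$, $\sigma$ and $\sigma'$ will not re-encounter $s$, so after the single deterministic step out of $s$ the two schedulers continue identically on $\mdl(s_1)$ and $\mdl(s_2)$ respectively. Using the deterministic transitions $s\xrightarrow{\alpha_1} s_1$ and $s\xrightarrow{\alpha_2} s_2$, the residual probabilities reduce to $Pr_{\mdl(s)}^{\sigma}(\psi) = c_s\cdot Pr_{\mdl(s_1)}^{\sigma}(\psi)$ and $Pr_{\mdl(s)}^{\sigma'}(\psi) = c_s\cdot Pr_{\mdl(s_2)}^{\sigma}(\psi)$, where $c_s \in \{0,1\}$ encodes whether $s$ itself meets the immediate safety requirement of $\psi$ (the same factor in both cases). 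The MoS-for-PMC assumption applied to the min scheduler $\sigma$ (\Cref{def:mosPMC}) then gives $Pr_{\mdl(s_1)}^{\sigma}(\psi) \geq Pr_{\mdl(s_2)}^{\sigma}(\psi)$, which feeds back through the decomposition to deliver the desired $Pr_{\mdl}^{\sigma'}(\psi) \leq Pr_{\mdl}^{\sigma}(\psi)$.

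The main obstacle is that \Cref{def:mosPMC} guarantees the state-wise safety comparison \emph{only} for min schedulers, not arbitrary ones. This is precisely why I must start from a min scheduler $\sigma$ and derive $\sigma'$ from it by a local modification at $s$; a more direct argument starting from arbitrary schedulers would fail. A secondary subtlety is justifying that swapping $\sigma(s)$ for $\alpha_2$ does not perturb the pre-$s$ or post-$s$ portions of the trace distribution, which is where acyclicity implied by the no-infinite-paths hypothesis plays an essential role; without it, paths could loop back to $s$ and the two schedulers could diverge arbitrarily.
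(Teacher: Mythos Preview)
Your proposal is correct and follows essentially the same approach as the paper: take a min scheduler $\sigma$, replace its choice at $s$ by $\alpha_2$ to obtain $\sigma'\in Sch_{\mdl'}$, apply \Cref{lemma:traceDecomp} together with acyclicity to match the pre-$s$ and reach-$s$ terms, and then invoke the MoS inequality for the min scheduler $\sigma$ on the residual $Pr_{\mdl(s_1)}^{\sigma}(\psi)\geq Pr_{\mdl(s_2)}^{\sigma}(\psi)$. Your write-up is slightly more explicit than the paper's (you spell out the easy direction $Sch_{\mdl'}\subseteq Sch_{\mdl}$ and the factor $c_s$ for the safety of $s$ itself), but the argument is the same.
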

\begin{proof}
Assume that state $s$ has non-deterministic transitions to state $s_1$ via action $\alpha_1$ and state $s_2$ via action $\alpha_2$:  $$(s,\alpha_1,\eta_{s_1}),(s,\alpha_2,\eta_{s_2}) \in \delta$$ In addition, assume that $s_1 \mosa s_2$. Now consider PA $M'$ that is formed by removing $(s,\alpha_1,\eta_{s_1})$ from $\delta$. 
To prove that the trimming is conservative, we just need to show that at least one min scheduler of \mdl for $\psi$ is contained in the set of schedulers for $\mdl'$.

Let $\sigma$ be a min scheduler of \mdl for $\psi$. So $Pr_{\mdl}^{\sigma}(\psi) = Pr_{\mdl}^{min}(\psi)$. We now show that either $\sigma$ or some other min scheduler is contained in the set of schedulers for $\mdl'$ by contradiction. Assume that $\sigma \notin Sch_{\mdl'}$. By the strong MoS assumption, $\sigma$ chooses action $\alpha_1$ in state $s$. Now consider scheduler $\sigma'$ which picks action $\alpha_2$ in state $s$ and is otherwise identical to $\sigma$. By \Cref{lemma:traceDecomp}, we have that:
\begin{align*}
    Pr_{\mdl}^{\sigma}(\psi) &= Pr_{\mdl}^{\sigma}(\psi \& \neg \diamond s) + Pr_{\mdl}^{\sigma}(\diamond s) Pr_{\mdl(s)}^{\sigma}(\psi) \\
    Pr_{\mdl}^{\sigma'}(\psi) &= Pr_{\mdl}^{\sigma'}(\psi \& \neg \diamond s) + Pr_{\mdl}^{\sigma'}(\diamond s) Pr_{\mdl(s)}^{\sigma'}(\psi)
\end{align*}

Note that $\sigma$, $\sigma'$ differ only in state $s$ and \mdl has no self loops since it only has finite paths. So:
$$Pr_{\mdl}^{\sigma}(\psi \& \neg \diamond s) = Pr_{\mdl}^{\sigma'}(\psi \& \neg \diamond s) \; \text{and} \; Pr_{\mdl}^{\sigma}(\diamond s) = Pr_{\mdl}^{\sigma'}(\diamond s) $$

For the final two terms, we unroll one step of each of the models from state $s$ under their respective schedulers. Scheduler $\sigma$ takes $\mdl$ to state $s_1$ and $\sigma'$ takes $\mdl$ to state $s_2$. Since $\sigma$, $\sigma'$ differ only in state $s$ and \mdl cannot loop back to $s$ from $s_2$, it follows that $Pr_{\mdl(s_2)}^{\sigma'}(\psi)=Pr_{\mdl(s_2)}^{\sigma}(\psi)$.

Since $s_1 \mosa s_2$ we have that:
\begin{align*}
    Pr_{\mdl(s_1)}^{\sigma}(\psi) \geq Pr_{\mdl(s_2)}^{\sigma}(\psi) = Pr_{\mdl(s_2)}^{\sigma'}(\psi)
\end{align*}

It follows that $Pr_{\mdl}^{\sigma}(\psi) \geq Pr_{\mdl}^{\sigma'}(\psi)$. If the inequality is strict then we have a contradiction, as $\sigma$ is not a min scheduler of \mdl and if the terms are equal then both $\sigma$ and $\sigma'$ are both min schedulers and the MoS trimming preserves at least one.

\end{proof}

The model-wide PMC MoS trimming procedure applies the trimming procedure in \Cref{def:MoSMCSingleStateTrimming} to all states of the model.

\begin{corollary}\label{cor:PMCConsModelWide}
    The model-wide PMC MoS trimming procedure is conservative under \Cref{def:mosPMC}. 
\end{corollary}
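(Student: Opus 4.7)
The plan is to prove the corollary by induction on the number of applications of the single-state trimming procedure from Definition 5.1. By construction, the model-wide procedure amounts to a finite sequence of such single-state trimmings, producing a chain $\mdl = \mdl_0 \to \mdl_1 \to \cdots \to \mdl_n = \mdl'$; finiteness follows because each trimming removes at least one transition from a finite transition relation. I would show, by induction on $i$, that $Pr_{\mdl_i}^{min}(\psi) = Pr_{\mdl}^{min}(\psi)$. The base case $i = 0$ is immediate, and the inductive step reduces to a single invocation of Theorem 5.2 applied to the trimming $\mdl_i \to \mdl_{i+1}$.

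The delicate part, which I expect to be the main obstacle, is justifying the hypothesis of Theorem 5.2 at each inductive step: the theorem requires a valid PMC MoS assumption on $\mdl_i$, whereas Definition 4.1 stipulates MoS only with respect to the original model $\mdl$ and its min schedulers. I would handle this by showing that the partial order $\mosa$ transfers to each $\mdl_i$. Observe that every scheduler of $\mdl_i$ is, by construction, also a scheduler of $\mdl$, since trimming only removes transitions and the action choices available at each state in $\mdl_i$ are a subset of those in $\mdl$. The induced path measures under any such scheduler are identical in $\mdl_i$ and in $\mdl$, because the two models differ only in transitions the scheduler never selects. Consequently, if $\sigma'$ is a min scheduler of $\mdl_i$, then by the induction hypothesis $Pr_{\mdl}^{\sigma'}(\psi) = Pr_{\mdl_i}^{\sigma'}(\psi) = Pr_{\mdl_i}^{min}(\psi) = Pr_{\mdl}^{min}(\psi)$, so $\sigma'$ is also a min scheduler of $\mdl$. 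The MoS bound in Definition 4.1, which holds for every min scheduler of $\mdl$, therefore holds for every min scheduler of $\mdl_i$, which is exactly what Theorem 5.2 needs.

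With MoS transfer in hand, each inductive step applies Theorem 5.2 to obtain $Pr_{\mdl_{i+1}}^{min}(\psi) = Pr_{\mdl_i}^{min}(\psi)$. Telescoping the chain yields $Pr_{\mdl'}^{min}(\psi) = Pr_{\mdl}^{min}(\psi)$, which gives conservatism in the strong sense of equality. A small sanity check worth including is that successive trimmings cannot introduce new infinite paths, since they only remove transitions, so the ``no infinite paths'' hypothesis of Theorem 5.2 automatically carries through the entire chain and never needs to be reverified.
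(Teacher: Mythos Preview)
Your proposal is correct and follows the only natural route: iterate Theorem~5.2 along the finite chain of single-state trimmings. The paper itself offers no proof of the corollary at all; it simply states it immediately after Theorem~5.2, treating the iteration as self-evident.

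What you do that the paper does not is identify and resolve the MoS-transfer issue: Definition~4.1 ties the partial order to the min schedulers of the \emph{original} model $\mdl$, whereas each invocation of Theorem~5.2 on $\mdl_i$ formally requires the MoS inequality to hold for the min schedulers of $\mdl_i$. Your argument---that $Sch_{\mdl_i} \subseteq Sch_{\mdl}$, that path measures under any $\sigma \in Sch_{\mdl_i}$ agree on $\mdl$ and $\mdl_i$, and that therefore $Sch_{\mdl_i}^{min} \subseteq Sch_{\mdl}^{min}$ by the induction hypothesis---is sound and fills a genuine gap the paper leaves implicit. Your closing remark that trimming cannot create infinite paths is also a correct and worthwhile observation.
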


\subsection{MoS Abstraction for Statistical Model Checking}
\label{sec:MoSForSMC}

We describe how to remove non-determinism from a single state for LSS as follows:

\begin{definition}[Single State LSS Transition Trimming]\label{def:MoSSMCSingleStateTrimming}
    Let $\mdl = \{ S,\bar{s},\alpha,\delta,L \}$ be a PA with MoS assumption \mosaSMC. Consider $s\in S$. Let $\alpha_1,\hdots,\alpha_d$ be the actions enabled in $s$. So    $$(s,\alpha_1,\eta_{s_1}),\hdots,(s,\alpha_d,\eta_{s_d}) \in \delta$$ for some $s_1,\hdots,s_d \in S$. If 
    $$s_1 \mosaSMC s_d,\hdots s_{d-1} \mosaSMC s_d$$
    then we return PA $\mdl'$ which is identical to \mdl except that it does not have $(s,\alpha_1,\eta_{s_2}),\hdots,(s,\alpha_{d-1},\eta_{s_{n-1}})$ in its transition relation.
\end{definition}

\begin{remark}
    Another way to think about this trimming is that we remove every scheduler that picks $\alpha_1,\hdots,\alpha_{d-1}$ in state $s$ from $Sch_{\mdl'}$.
\end{remark}


We now prove that under the MoS assumption from \Cref{def:mosLSS} each application of the trimming procedure in \Cref{def:MoSSMCSingleStateTrimming} is conservative. Since the LSS outputs are random variables, when we say that the results are conservative we mean that the LSS results on the untrimmed model have first order stochastic dominance (FSD) over those of the trimmed model. This is a formal way of defining that we expect \mdl to return larger safety chances than $\mdl'$.

\begin{definition}[First-order Stochastic Dominance (FSD)]\label{def:fsd}
We say that random variable $A$ has FSD over random variable $B$ if $F_{A}(x) \leq F_{B}(x) \; \forall x$, where $F_A(x)$ and $F_B(x)$ are the cumulative density functions (CDFs) of $A$ and $B$ at value $x$. We denote this as $A \geq_{FSD} B$.
\end{definition}

\begin{remark}
    One way to think about this definition is that $A \geq_{FSD} B$ if the probability mass of $A$ lies to the right of that of $B$.
\end{remark}

We now state the single state conservatism result and give its proof.

\begin{theorem}[LSS MoS and LSS MoS Trimming Implies LSS Conservatism]\label{thm:SMCCons}

Let \mdl be a PA with MoS assumption \mosaSMC. Assume that there are no infinite paths through \mdl. Let $\psi$ be an LTL safety property. If we trim $\mdl$ according to the procedure detailed in \Cref{def:MoSSMCSingleStateTrimming} for state $s$ to arrive at model $\mdl'$, 
then $Pr_{\mdl}^{LSS,n}(\psi) \geq_{FSD} Pr_{\mdl'}^{LSS,n}(\psi)$.

\end{theorem}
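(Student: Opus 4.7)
The plan is to construct a probability-preserving coupling between uniform samplings of schedulers from $Sch_{\mdl}$ and $Sch_{\mdl'}$, and then establish pointwise dominance of the min-statistics within that coupling. First, I will note that the trimming in \Cref{def:MoSSMCSingleStateTrimming} deletes transitions only out of state $s$, so a scheduler lies in $Sch_{\mdl'}$ exactly when it selects action $\alpha_d$ at $s$. This yields a natural map $\Phi : Sch_{\mdl} \to Sch_{\mdl'}$ defined by $\Phi(\sigma)(s) = \alpha_d$ and $\Phi(\sigma)(s'') = \sigma(s'')$ for every $s'' \neq s$. Because each $\sigma' \in Sch_{\mdl'}$ has exactly $d$ preimages under $\Phi$ (one per possible choice of $\sigma(s) \in \{\alpha_1, \ldots, \alpha_d\}$), the pushforward of the uniform distribution on $Sch_{\mdl}$ through $\Phi$ is the uniform distribution on $Sch_{\mdl'}$. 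I therefore couple the sampled schedulers $\sigma_1, \ldots, \sigma_n \sim Unif(Sch_{\mdl})$ with $\Phi(\sigma_1), \ldots, \Phi(\sigma_n) \sim Unif(Sch_{\mdl'})$.

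Second, I will show the pointwise inequality $Pr_{\mdl}^{\sigma}(\psi) \geq Pr_{\mdl'}^{\Phi(\sigma)}(\psi)$ for every $\sigma \in Sch_{\mdl}$, closely mirroring the proof of \Cref{thm:MCCons}. Applying \Cref{lemma:traceDecomp} at state $s$ under both schedulers decomposes each probability into the ``$\psi \land \neg \diamond s$'' term and the $Pr(\diamond s)$ factor, both of which agree because $\sigma$ and $\Phi(\sigma)$ differ only at $s$. What remains is to compare the continuations, $Pr_{\mdl(s_i)}^{\sigma}(\psi)$ and $Pr_{\mdl'(s_d)}^{\Phi(\sigma)}(\psi)$, where $\sigma(s) = \alpha_i$. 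Since \mdl has no infinite paths, no execution revisits $s$ once it has been left, so $\sigma$ and $\Phi(\sigma)$ induce identical behavior from $s_d$ onward, yielding $Pr_{\mdl(s_d)}^{\sigma}(\psi) = Pr_{\mdl'(s_d)}^{\Phi(\sigma)}(\psi)$. The LSS MoS assumption (\Cref{def:mosLSS}) then gives $Pr_{\mdl(s_i)}^{\sigma}(\psi) \geq Pr_{\mdl(s_d)}^{\sigma}(\psi)$, which chains into the desired inequality.

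Third, taking minima over $i$ preserves the inequality, so $Pr_{\mdl}^{LSS,n}(\psi) \geq Pr_{\mdl'}^{LSS,n}(\psi)$ holds almost surely under the coupling. For any threshold $x$, this almost-sure dominance implies $P(Pr_{\mdl}^{LSS,n}(\psi) \leq x) \leq P(Pr_{\mdl'}^{LSS,n}(\psi) \leq x)$, which by \Cref{def:fsd} is precisely FSD. Since the marginal distributions under the coupling match the original independent uniform samplings, the FSD conclusion transfers back to the original LSS outputs. The model-wide statement then follows by iterating the single-state trimming, analogously to \Cref{cor:PMCConsModelWide}.

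The main obstacle will be rigorously justifying the identity $Pr_{\mdl(s_d)}^{\sigma}(\psi) = Pr_{\mdl'(s_d)}^{\Phi(\sigma)}(\psi)$. This relies on the no-infinite-paths hypothesis to rule out any return to $s$ from $s_d$, so that neither the removed transitions at $s$ nor the modification of $\Phi(\sigma)$ at $s$ can influence the post-$s_d$ trace. The same caveat appears implicitly in the proof of \Cref{thm:MCCons}; making it fully rigorous may require an explicit acyclicity argument or an induction on the bounded path length. Once that point is settled, the remainder is bookkeeping with \Cref{lemma:traceDecomp} and the definition of FSD.
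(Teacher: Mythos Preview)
Your proposal is correct and follows essentially the same approach as the paper: you define the surjection $\Phi$ (the paper calls it ``corresponding scheduler''), use the $d$-to-$1$ preimage count to show the pushforward of the uniform distribution is uniform, establish the pointwise inequality $Pr_{\mdl}^{\sigma}(\psi) \geq Pr_{\mdl'}^{\Phi(\sigma)}(\psi)$ via \Cref{lemma:traceDecomp} and \mosaSMC, and then pass to minima and FSD through the coupling. Your treatment of the coupling and the FSD deduction is in fact slightly more explicit than the paper's, and your flagged ``main obstacle'' (no return to $s$ from $s_d$) is handled in the paper by the same one-line appeal to the absence of infinite paths.
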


\begin{proof}
Let $s$ be a state in $\mdl$. Assume that state $s$ has non-deterministic transitions to states $s_1,s_2,\hdots,s_d$  $$(s,\alpha_1,\eta_{s_1}),\hdots,(s,\alpha_d,\eta_{s_d})\in \delta$$ and that:
$$s_1 \mosa s_d, \hdots, s_{d-1} \mosa s_d$$
Now consider PA $\mdl'$ formed by removing $(s,\alpha_1,\eta_{s_1}),\hdots,(s,\alpha_d,\eta_{s_{d-1}})$ from $\delta$.



Consider scheduler $\sigma \in Sch_{\mdl}$. Now we generate its corresponding scheduler in $\mdl'$, called $\sigma'$, by requiring $\sigma'$ to take action $\alpha_d$ to state $s_d$ whenever it is in state $s$. Effectively, we are saying that of the $d$ choices that $\sigma$ could make in state $s$, $\sigma'$ takes $\alpha_d$. So each $\sigma' \in Sch_{\mdl'}$ has $d$ corresponding schedulers in $Sch_{\mdl}$. It follows that $|Sch_{\mdl}| = d*|Sch_{\mdl'}|$. 

Now we show that $Pr_{\mdl}^{\sigma}(\psi) \geq Pr_{\mdl'}^{\sigma'}(\psi)$. By \Cref{lemma:traceDecomp}, we have:
\begin{align*}
    Pr_{\mdl}^{\sigma}(\psi) &= Pr_{\mdl}^{\sigma}(\psi \& \neg \diamond s) + Pr_{\mdl}^{\sigma}(\diamond s) Pr_{\mdl(s)}^{\sigma}(\psi) \\
    Pr_{\mdl'}^{\sigma'}(\psi) &= Pr_{\mdl'}^{\sigma'}(\psi \& \neg \diamond s) + Pr_{\mdl'}^{\sigma'}(\diamond s) Pr_{\mdl'(s)}^{\sigma'}(\psi)
\end{align*}

Note that $\sigma$, $\sigma'$ differ only in state $s$ and \mdl has no self loops since it only has finite paths. So:
$$Pr_{\mdl}^{\sigma}(\psi \& \neg \diamond s) = Pr_{\mdl'}^{\sigma'}(\psi \& \neg \diamond s) \; \text{and} \; Pr_{\mdl}^{\sigma}(\diamond s) = Pr_{\mdl'}^{\sigma'}(\diamond s) $$

For the final two terms, we unroll one step of the models from state $s$. This one step takes model $\mdl$ to one of the following states: $\{ s_1,s_2,\hdots,s_d\}$ (determined by scheduler $\sigma$) and model $\mdl'$ to state $s_d$ (since $\mdl'$ doesn't have non-deterministic transitions to $s_1,\hdots,s_{d-1}$). By \mosaSMC, $Pr_{\mdl(s_d)}^{\sigma}(\psi) \leq Pr_{\mdl(s_i)}^{\sigma}(\psi), \; \; i=1,\hdots,d-1$. Finally, since $\mdl$ and $\mdl'$ and $\sigma$ and $\sigma'$ differ only in state $s$ and \mdl cannot loop back to $s$ from $s_2,\hdots,s_{d-1}$, it follows that $Pr_{\mdl(s_d)}^{\sigma}(\psi) = Pr_{\mdl'(s_d)}^{\sigma'}(\psi)$. So $Pr_{\mdl}^{\sigma}(\psi) \geq Pr_{\mdl'}^{\sigma'}(\psi)$.

Now draw some $\sigma_1 \sim Unif(Sch_{\mdl})$. Let $\sigma_1'$ be the corresponding scheduler in $sch_{\mdl'}$. It follows that $\sigma_1' \sim Unif(sch_{\mdl'})$:
\begin{align*}
    P(\sigma_1' = \sigma') = P(\text{$\sigma_1$ corresponds to $\sigma'$})
    = \frac{1}{|Sch_{\mdl}|} d
    = \frac{1}{|Sch_{\mdl'}|}
\end{align*}

Now we apply LSS simultaneously to \mdl and $\mdl'$. 

For \mdl, we pick $n$ schedulers uniformly $\sigma_1,\hdots,\sigma_n \sim Unif(Sch_{\mdl})$, compute each scheduler's safety probability $p_i := Pr_{\mdl}^{\sigma_i}(\psi)$ and return the smallest safety probability: $min_{i=1,\hdots,n}(p_i)$. 


For $\mdl'$, we convert each $\sigma_i$ into its corresponding $\sigma_{i}' \in Sch_{\mdl'}$, compute its safety probability $p'_i := Pr_{\mdl'}^{\sigma'_i}(\psi)$ and return the minimum one: $min_{i=1,\hdots,n}(p'_i)$.



From earlier, we know that $p_i \geq p'_i \; \forall i=1,\hdots,n$. It follows that $min_{i=1,\hdots,n}(p_i) \geq min_{i=1,\hdots,n}(p'_i)$. So with this shared sample space of schedulers we always get a larger value from running LSS on $\mdl$ than $\mdl'$. Thus, the probability mass of $Pr_{\mdl}^{LSS}(\psi)$ lies to the right of that of $Pr_{\mdl'}^{LSS}(\psi)$. Thus, $Pr_{\mdl}^{LSS}(\psi) \geq_{FSD} Pr_{\mdl'}^{LSS}(\psi)$.


\end{proof}
The model-wide LSS MoS trimming procedure works by applying the trimming procedure in \Cref{def:MoSSMCSingleStateTrimming} to all states of the model.

\begin{corollary}\label{cor:LSSConsModelWide}
    The model-wide LSS MoS trimming procedure is conservative under \Cref{def:mosLSS}. 
\end{corollary}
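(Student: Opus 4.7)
}
The plan is to proceed by induction on the number of states at which the single-state trimming from \Cref{def:MoSSMCSingleStateTrimming} is applied, since the model-wide procedure is exactly the iteration of that single-state operation over all states of \mdl. Let $\mdl_0 = \mdl$, and let $\mdl_{k+1}$ be the PA obtained from $\mdl_k$ by applying the single-state trimming at some state $s_k$ with respect to \mosaSMC. Write $\mdl_N$ for the final trimmed model. The goal is to conclude that $Pr_{\mdl_0}^{LSS,n}(\psi) \geq_{FSD} Pr_{\mdl_N}^{LSS,n}(\psi)$, using \Cref{thm:SMCCons} at each step and transitivity of FSD.

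The chief obstacle, and the main technical step, is showing that the MoS assumption \mosaSMC continues to hold on each intermediate model $\mdl_k$, so that \Cref{thm:SMCCons} can be reapplied. The key observation is that each scheduler $\sigma'$ of $\mdl_{k+1}$ lifts canonically to a scheduler $\sigma$ of $\mdl_k$ by copying $\sigma'$ on every state other than $s_k$ and forcing the action $\alpha_d$ (the retained one) at $s_k$; since the removed transitions are never taken, the path distributions, and therefore the satisfaction probabilities, are preserved starting from any state:
\begin{equation*}
    Pr_{\mdl_{k+1}(t)}^{\sigma'}(\psi) \;=\; Pr_{\mdl_k(t)}^{\sigma}(\psi) \quad \text{for every } t \in S.
\end{equation*}
Combining this equality (applied to $t=s_1$ and $t=s_2$ for any ordered pair $s_1 \mosaSMC s_2$) with the MoS assumption on $\mdl_k$ applied to $\sigma$ yields $Pr_{\mdl_{k+1}(s_1)}^{\sigma'}(\psi) \geq Pr_{\mdl_{k+1}(s_2)}^{\sigma'}(\psi)$, so \mosaSMC holds on $\mdl_{k+1}$ as well. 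An induction from $\mdl_0$ gives \mosaSMC on every $\mdl_k$.

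With MoS established on each $\mdl_k$, \Cref{thm:SMCCons} directly yields $Pr_{\mdl_k}^{LSS,n}(\psi) \geq_{FSD} Pr_{\mdl_{k+1}}^{LSS,n}(\psi)$ for every $k$. Transitivity of FSD follows immediately from its CDF definition (\Cref{def:fsd}): if $F_A(x) \leq F_B(x)$ and $F_B(x) \leq F_C(x)$ for all $x$, then $F_A(x) \leq F_C(x)$ for all $x$. Chaining the per-step FSD relations through $\mdl_0,\mdl_1,\ldots,\mdl_N$ then delivers $Pr_{\mdl}^{LSS,n}(\psi) \geq_{FSD} Pr_{\mdl_N}^{LSS,n}(\psi)$, which is exactly the conservativeness claim for the model-wide procedure.

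One minor bookkeeping point to verify in the write-up is that the order in which states are trimmed does not affect applicability: trimming at a state $s_k$ does not alter the outgoing transitions from any other state $s_{k'}$, so the enabled actions at $s_{k'}$, and the targets used in the hypothesis of \Cref{def:MoSSMCSingleStateTrimming}, are the same as in \mdl. The finite-path assumption required by \Cref{thm:SMCCons} is preserved as well, since trimming only removes transitions and cannot introduce new cycles. These observations ensure the induction is well-defined, and no further properties beyond \Cref{thm:SMCCons} and transitivity of FSD are needed.
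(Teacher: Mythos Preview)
Your proposal is correct and follows the same approach the paper implicitly intends: iterate \Cref{thm:SMCCons} over all states and chain the resulting FSD inequalities. The paper states the corollary without proof, so your write-up is actually more careful than the paper's treatment—in particular, your verification that \mosaSMC propagates to each intermediate $\mdl_k$ (via $Sch_{\mdl_{k+1}} \subseteq Sch_{\mdl_k}$ and equality of path distributions under the retained schedulers) fills a gap the paper glosses over.
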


\subsubsection{LSS Data Efficiency}
\label{sec:MoSLSSDataEff}

As LSS is a sampling based method, it has a trade-off between sample count and accuracy. Sampling more schedulers produces more accurate LSS results. Intuitively, the larger the set of schedulers for a PA \mdl, the more schedulers LSS should sample to get a reasonable probability estimate of $Pr_{\mdl}^{min}(\psi)$. Thus, reducing the number of schedulers when constructing $\mdl'$ allows LSS to sample fewer schedulers. In our evaluation, we show that our trimmings allow for better LSS results while using 10x fewer schedulers than without trimming.

\section{Evaluation and Results} 
\label{sec:evaluation}

Our experimental evaluation had two goals: compare the scalability/data efficiency of our abstractions and explore the extent to which \Cref{def:mosLSS} holds on small models\footnote{Large models have too many schedulers to enumerate them.}. We perform evaluation on simulations of the two case studies described in \Cref{sec:system}. For PMC we use the PRISM model checker \cite{Kwiatkowska2011PRISM} and for LSS we use the MODEST model checker \cite{Budde2018RareLSS}. The code and models used can be found at: \texttt{https://github.com/earnedkibbles58/ICCPS\_2022\_MoS}.

\subsection{Setup: Data Collection and Modeling}

\subsubsection{AEBS}
First, we setup the AEBS controller in the self-driving car simulator CARLA~\cite{Dosovitskiy17}, using a red Toyota Prius as the obstacle. We performed two data collections: (i) constant-$v$ runs to gather image-distance data to construct PA \mper, (ii) AEBS-controlled runs to approximate the true value of \oursafeprob.

\looseness=-1
For the former collection, we ran 500 simulations starting from 200m and obtained 180500 pairs ($d$, $\outlabs$). Model \mper contains the probabilities of detection conditioned on the 10m-wide distance bin and the past 3 low-level outcomes. For the latter collection, we ran 300 simulations of emergency braking starting at 160m and 20m/s, with YoloNet perception and a 3-wide majority filter on top. 20 runs out of 300 resulted in a crash, yielding a $[0.04, 0.1]$ 95\% confidence interval (CI) for the true collision chance (i.e., $1 - \oursafeprob$).

\subsubsection{Water Tank}
To derive a probabilistic model of the simulated perception error with $\errwin=6$, at each water level (from $1$ to \tanksize) we ran 100 trials of the perception and recorded the counts of perception errors into $15$ bins, $13$ of which were of width $1$, representing errors from $-6$ to $6$. Errors above $6$ or below $-6$ got their own bins as well, which represented the perception giving an output of $0$ or \tanksize, respectively. To summarize, the model of \mper was a categorical distribution over the perception error with the following bins:
\begin{align*}
    \{0,wl-6,wl-5,\hdots,wl+5,wl+6,\tanksize\}
\end{align*}

\subsection{PMC Scalability and Accuracy}\label{sec:results}

For each case study, we consider two different models: (i) \mba from \Cref{sec:modelBackground}, (ii) $\mbatt^{PMC}$, which we form by applying the MoS-driven trimming procedures from \Cref{def:MoSMCSingleStateTrimming}.


\subsubsection{AEBS}
\looseness=-1
One experiment evaluates the accuracy-scalability tradeoffs for a fixed initial state (160m, 20m/s) and differing distance interval sizes of the models. \Cref{fig:AEBSDiffDistDiscs} shows that $\mbatt^{PMC}$ outperforms \mba in terms of run time while maintaining empirical conservatism. This shows that our MoS-enabled trimming shows significant speed ups while maintaining empirical conservatism, as it returns a lower safety chance than the simulations, for a range of model hyperparameters.


\begin{figure}
    \centering
    \includegraphics[width=1\linewidth]{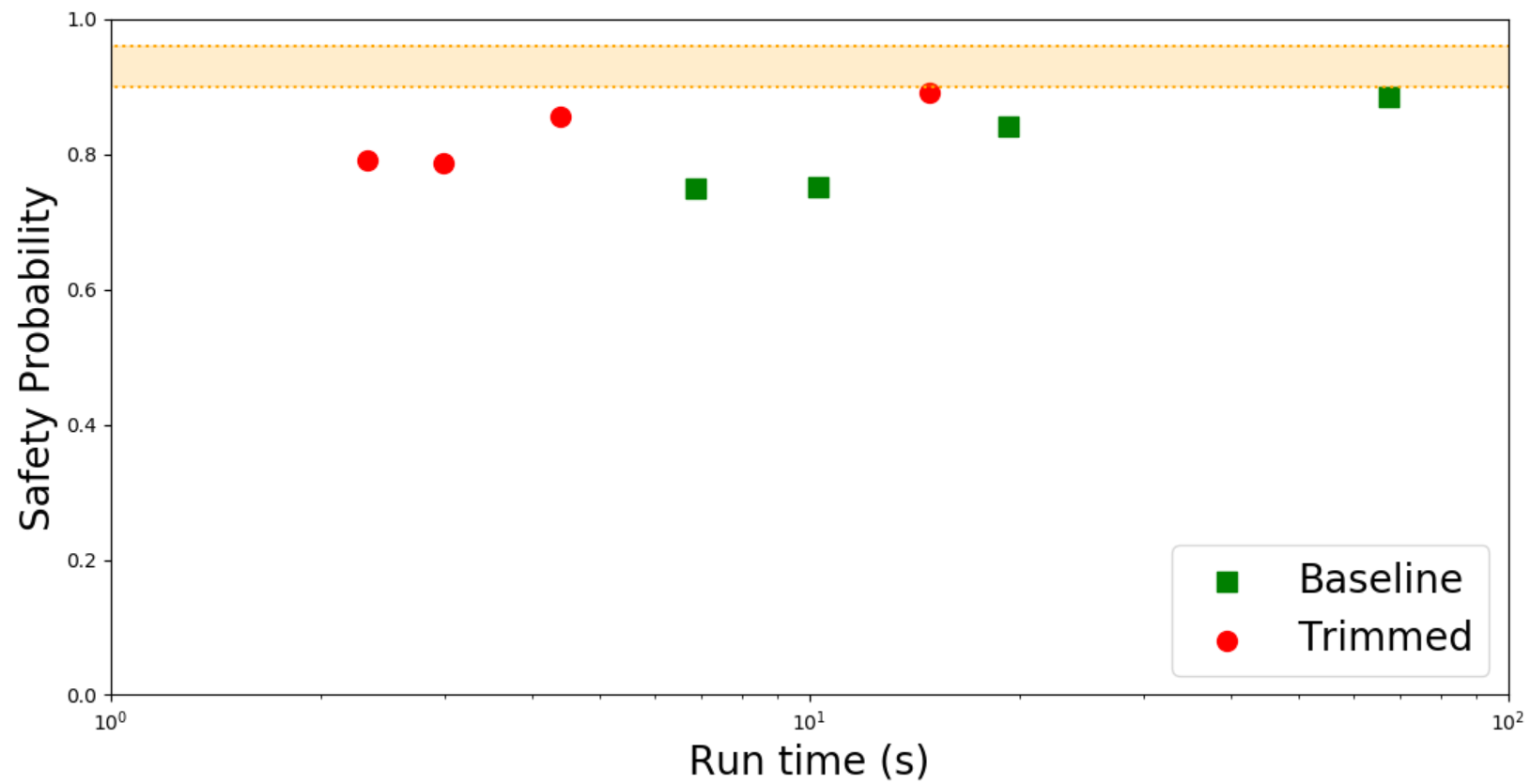}
    \caption{\small AEBS run times and safety probabilities for \mba (green squares) and $\mbatt^{PMC}$ (red dots),  and 95\% CI for the simulated collision chance (orange band). Each dot represents a different interval size for \mba.}
    \label{fig:AEBSDiffDistDiscs}
    \vspace*{-3mm}
\end{figure}


\begin{figure}
\centering
\begin{subfigure}{0.25\textwidth}
  \centering
  \includegraphics[width=1\linewidth]{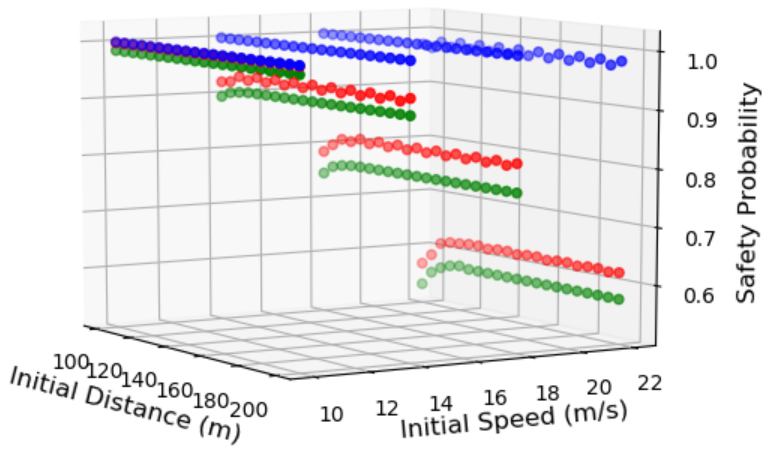}
  \label{fig:scalabilityCrashProbsAEBS}
\end{subfigure}%
\begin{subfigure}{0.25\textwidth}
  \centering
  \includegraphics[width=1\linewidth]{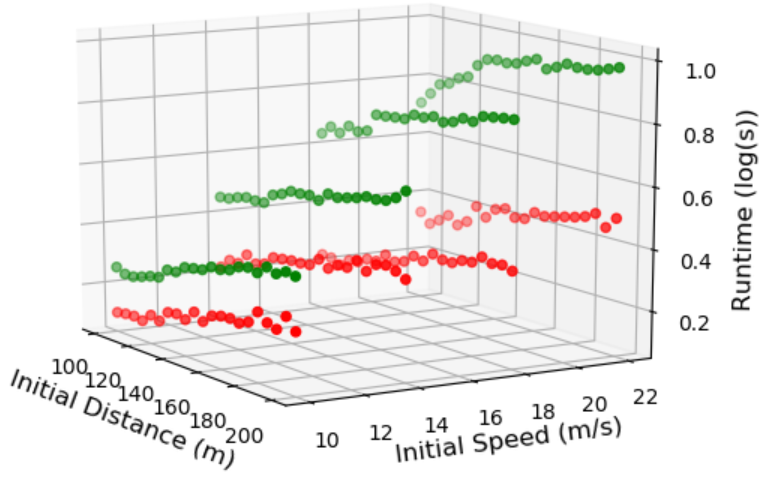}
  \label{fig:scalabilityRuntimesAEBS}
\end{subfigure}
\caption{\small Comparison of PMC safety probabilities (left) and run times (right) for models across different initial distances and velocities of \mba (green), $\mbatt^{PMC}$ (red) and $\mbattneg^{PMC}$ (blue) models of the AEBS example.}
\label{fig:scalabilityAEBS}
\vspace*{-3mm}
\end{figure}

In the other experiment we fixed the distance interval size and varied the initial conditions, shown in \Cref{fig:scalabilityAEBS}. First, \mba scales very poorly. It takes an order of magnitude longer to run than $\mbatt^{PMC}$. Second, $\mbatt^{PMC}$ returns similar safety probabilities to \mba. Third, we computed a model using the negations of \mosdist and \mosspeed, which we denote as $\mbattneg^{PMC}$. This model returns very large safety probabilities. This is further evidence that \mosdist and \mosspeed are fairly accurate, as taking their negations produces overly safe results. 


\subsubsection{Water Tank}
We first fixed the initial water levels to be $50$ (exactly half full) and varied the water level interval size of the models. \Cref{fig:waterTankDiffWLDiscs} shows that the $\mbatt^{PMC}$ outperforms \mba in terms of run time while returning similar safety probabilities over a range of hyperparameters. 

We then fixed the water level interval size and varied the initial conditions, shown in \Cref{fig:scalabilityWaterTank}. First, \mba scales very poorly, with several models taking upwards of an hour to verify. $\mbatt^{PMC}$ speeds this by up to 2 orders of magnitude. Second, $\mbatt^{PMC}$ returns very similar safety probabilities to \mba. Third, \mbattneg returns really high safety probabilities, which further indicates that \moswl is an accurate assumption.


\begin{figure}
    \centering
    \includegraphics[width=0.8\linewidth]{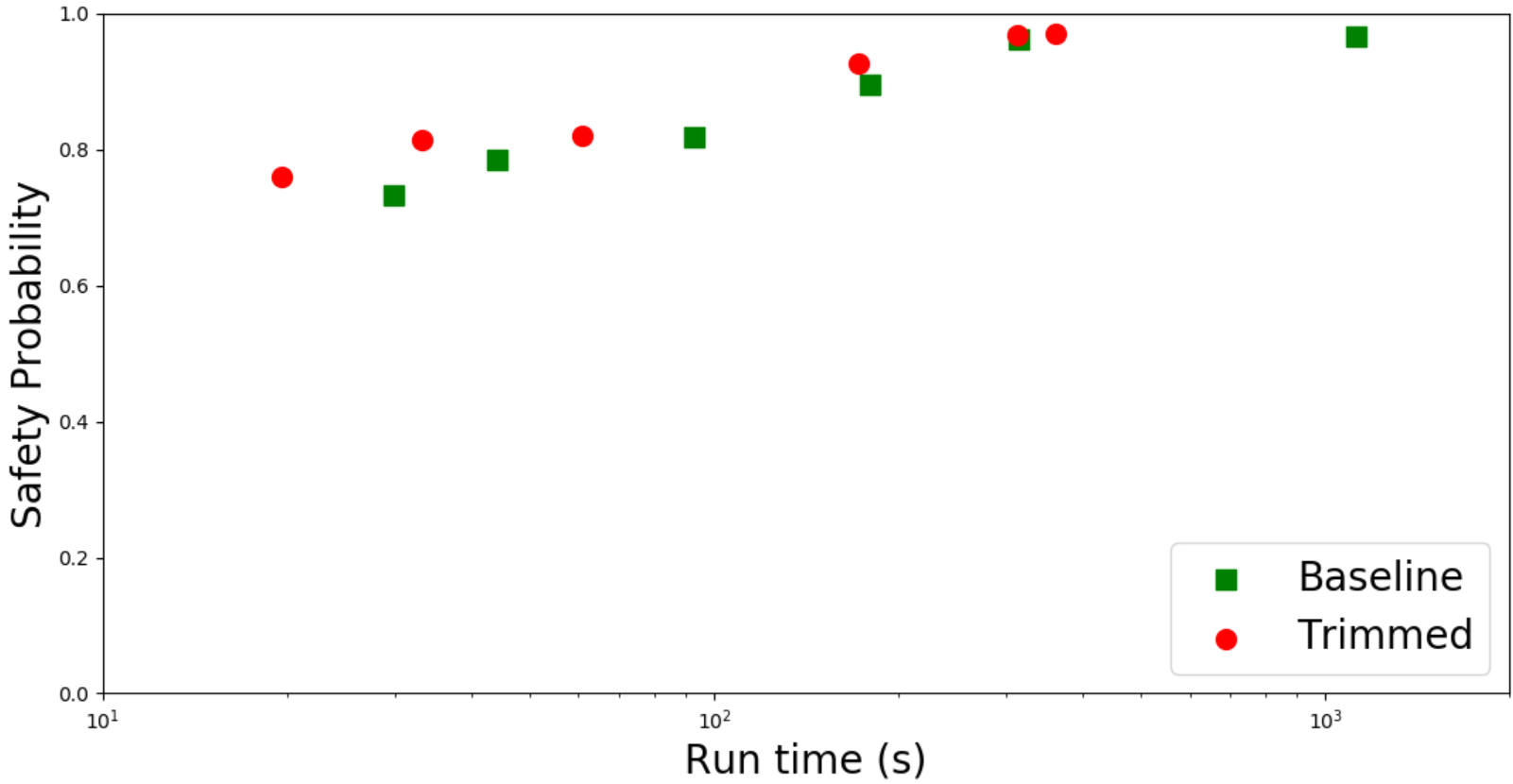}
    \caption{\small Water tank PMC run times and overflow/underflow probabilities for \mba (green square) and $\mbatt^{PMC}$ (red dot). Each dot is for a set of values of a model's hyperparameters.}
    \label{fig:waterTankDiffWLDiscs}
\end{figure}




\begin{figure}
\centering
\begin{subfigure}{0.25\textwidth}
  \centering
  \includegraphics[width=1\linewidth]{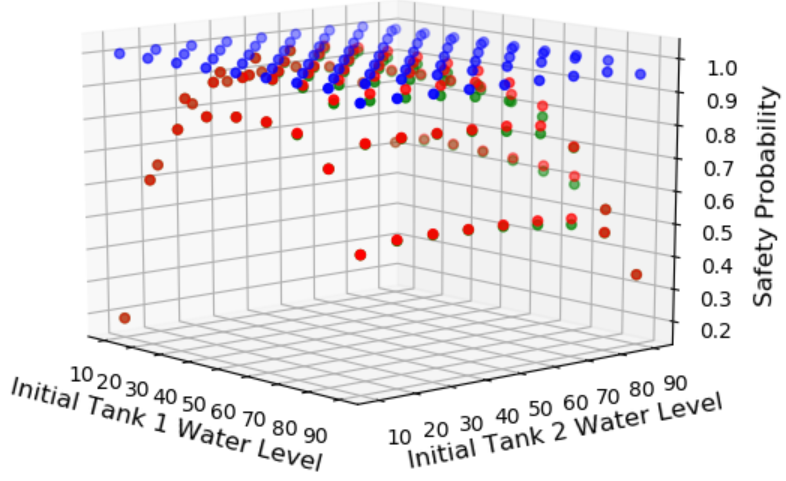}
  \label{fig:scalabilityCrashProbsWaterTank}
\end{subfigure}%
\begin{subfigure}{0.25\textwidth}
  \centering
  \includegraphics[width=1\linewidth]{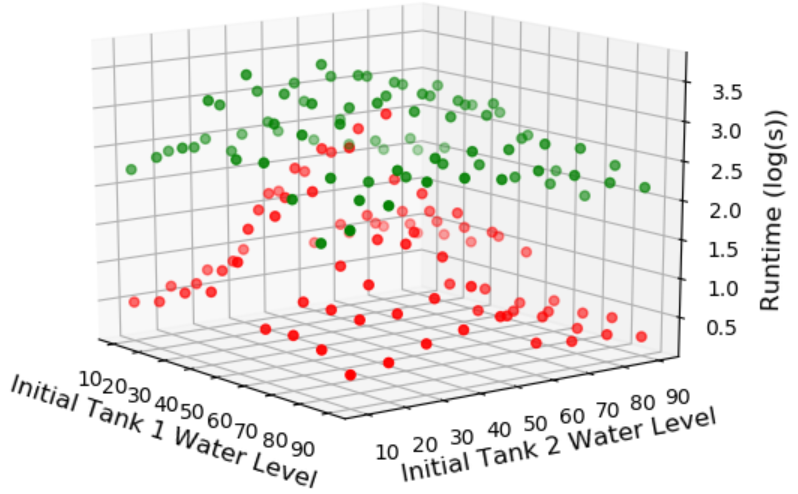}
  \label{fig:scalabilityRuntimesWaterTank}
\end{subfigure}
\caption{\small Comparison of PMC safety probabilities (left) and run times (right) of \mba (green), $\mbatt^{PMC}$ (red) and $\mbattneg^{PMC}$ (blue) models for the water tank example.}
\label{fig:scalabilityWaterTank}
\vspace*{-3mm}
\end{figure}

\subsection{LSS Data Efficiency}

For each case study, we consider two different models: (i) \mba from \Cref{sec:modelBackground}, (ii) $\mbatt^{LSS}$, which we form by applying the MoS-driven trimming procedure from \Cref{def:MoSSMCSingleStateTrimming}.


We kept the same distance and water level interval sizes as before and ran LSS on a smaller range of initial conditions for both abstractions. The safety probability of each scheduler was computed to an absolute error of $0.05$ with $0.8$ confidence. We ran LSS using $10$ scheduler samples on both \mba and $\mbatt^{LSS}$. In addition, we ran LSS using $1$ scheduler sample on $\mbatt^{LSS}$. For all cases, we ran $10$ different trials of LSS and averaged the results. The AEBS results are shown in \Cref{tab:AEBSLSSProbs} and \Cref{tab:AEBSLSSRunTime} 
and the water tank results are shown in \Cref{fig:scalabilityWaterTankLSS}. The $\mbatt^{LSS}$ models give more conservative safety probabilities with 10x fewer schedulers and, hence, 10x smaller run times. This is because all the extra non-determinism in the \mba models dilutes the scheduler sampling. Note that $\mbatt^{LSS}$ took around the same time as \mba when using an equal number of samples.

\begin{table}
\centering
\begin{tabular}{|c|c|c|c|}
    \hline \specialcell{Initial \\ Condition (d,v)}   & \specialcell{\mba and \\ 10 Schedulers} & \specialcell{$\mbatt^{LSS}$ and \\ 10 Schedulers} & \specialcell{$\mbatt^{LSS}$ and \\ 1 Scheduler}  \\
    \hline (130,14) & 0.991 & 0.919 & 0.962 \\
    \hline (130,18) & 0.962 & 0.802 & 0.831 \\
    \hline (160,14) & 0.993 & 0.919 & 0.943 \\
    \hline (160,18) & 0.966 & 0.802 & 0.846 \\
    \hline
\end{tabular}
\caption{LSS Safety Probabilities for the AEBS system.}
\label{tab:AEBSLSSProbs}
\vspace*{-3mm}
\end{table}

\begin{table}
\centering
\begin{tabular}{|c|c|c|c|}
    \hline \specialcell{Initial \\ Condition (d,v)}   & \specialcell{\mba and \\ 10 Schedulers } & \specialcell{$\mbatt^{LSS}$ and \\ 10 Schedulers} & \specialcell{$\mbatt^{LSS}$ and \\ 1 Scheduler}  \\
    \hline (130,14) & 171 & 201 & 15 \\
    \hline (130,18) & 603 & 179 & 11 \\
    \hline (160,14) & 203 & 250 & 17 \\
    \hline (160,18) & 608 & 228 & 12 \\
    \hline
\end{tabular}
\caption{LSS Run Times (sec) for the AEBS system.}
\label{tab:AEBSLSSRunTime}
\vspace*{-3mm}
\end{table}

\begin{figure}
\centering
\begin{subfigure}{0.25\textwidth}
  \centering
  \includegraphics[width=1\linewidth]{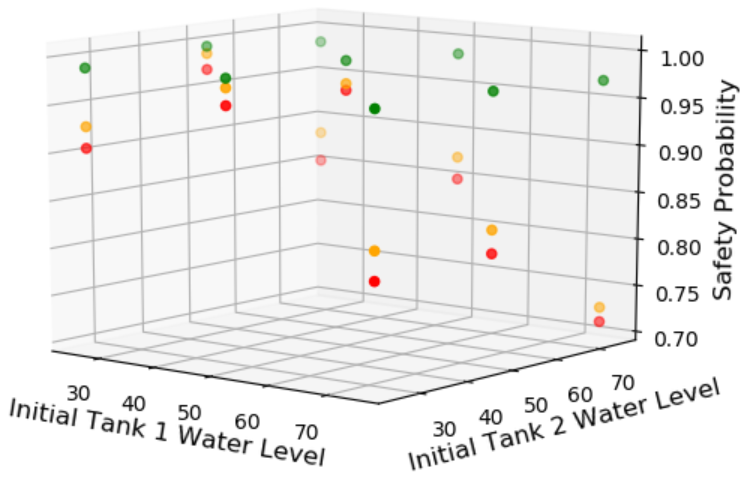}
  \label{fig:scalabilityCrashProbsWaterTankLSS}
\end{subfigure}%
\begin{subfigure}{0.25\textwidth}
  \centering
  \includegraphics[width=1\linewidth]{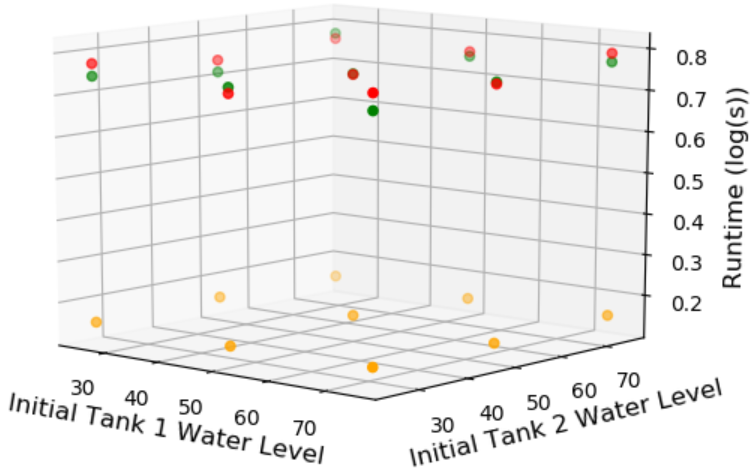}
  \label{fig:scalabilityRuntimesWaterTankLSS}
\end{subfigure}
\caption{\small LSS safety probabilities (left) and run times (right) of water tank models across different initial water levels for \mba and 10 schedulers (green), $\mbatt^{LSS}$ and 10 schedulers (red) and $\mbatt^{LSS}$ and 1 scheduler (orange).}
\label{fig:scalabilityWaterTankLSS}
\vspace*{-3mm}
\end{figure}

\subsection{Validating MoS Assumptions}
We generated small \mba models for the AEBS and water tanks model and enumerated every scheduler of both. We computed the proportion of schedulers for which \Cref{def:mosLSS} held for each pair of states involved in the MoS trimming, thus quantifying the extent to which our MoS assumptions held on these models. 

\subsubsection{AEBS}

We kept the same discretization parameters as before and used an initial distance of $9$m and speed of $1.2$m/s. \mba had $11664$ schedulers and a minimum safety chance of $0.986$. $\mbatt^{PMC}$ had a safety chance of $0.990$, so our MoS-trimming is not conservative and the conjunction of \mosdist and \mosspeed does not hold. However, we want to examine how close this conjunction is to holding. So we computed each state pair involved in the MoS-trimming procedure detailed in \Cref{sec:MoSForMC}, of which there were 17. Consider one such pair $(s_1,s_2)$, with $s_1 \mosa s_2$. We then computed the safety probability of the model when starting in both $s_1$ and $s_2$ for every scheduler of the model and computed the proportion of schedulers for which $s_1$ gave a higher safety chance than $s_2$. This amounts to computing $P_{\sigma \sim Sch_{\mdl}} \left( Pr_{\mdl(s_1)}^{\sigma}(\oursafeprop) \geq Pr_{\mdl(s_2)}^{\sigma}(\oursafeprop) \right)$.\footnote{If \moswl held, then this proportion would be $1$.} We denote this probability as $p_{s_1,s_2}$.

We applied this procedure to each of the $17$ trimmed state pairs. For $15$ of them, $p_{s_1,s_2}=1$. The other two had values of $0.5$ and $0.656$, respectively. So \mosdist and \mosspeed hold perfectly for the majority of the state pairs. But for the pairs where it didn't hold it was not very close to holding. This is most likely due to those two state pairs being near the AEBS control boundaries and echoes the intuitions from the counterexamples, which is that \mosdist and \mosspeed do not hold near control and perception boundaries.

\subsubsection{Water Tank}

We used a model of a single water tank with a max water level of $30$ and the same $T$ and water level interval size as before. \mba had $1024$ schedulers and a minimum safety chance of $0.261$. $\mbatt^{PMC}$ had a safety chance of $0.303$. So our MoS-trimming is not conservative and \moswl does not hold. However, we want to examine how close \moswl is to holding. So we computed each state pair involved in the MoS-trimming procedure from \Cref{sec:MoSForMC} used to convert \mba into $\mbatt^{PMC}$, of which there were 5. Consider one such pair $(s_1,s_2)$, with $s_1 \mosa s_2$. We then computed the safety probability of the model when starting in both $s_1$ and $s_2$ for every scheduler of the model and calculated the proportion of schedulers for which $s_1$ gave a higher safety chance than $s_2$. This amounts to computing $P_{\sigma \sim Sch_{\mdl}} \left( Pr_{\mdl(s_1)}^{\sigma}(\tanksafeprop) \geq Pr_{\mdl(s_2)}^{\sigma}(\tanksafeprop) \right)$. 

We applied this process for each of the $5$ state pairs and the results are shown in \Cref{tab:MoSEvalWaterTank}. This shows that although our MoS-trimming is not conservative, our MoS assumption \moswl holds for most schedulers and state pairs. We see that MoS holds over more schedulers near the extremes of the water tank. This also demonstrates why the MoS-trimming is so conservative for LSS, since most of the trimmed schedulers are in fact overly optimistic.

\begin{table}[]
    \centering
    \begin{tabular}{|c|c|c|}
        \hline State $s_1$ & State $s_2$ & $p_{s_1,s_2}$  \\
        \hline $[5,10]$ &  $[0,5]$ & $1.0$\\
        \hline $[10,15]$ &  $[5,10]$ & $0.934$  \\
        \hline $[15,20]$ & $[10,15]$ & $0.813$ \\
        \hline $[15,20]$ & $[20,25]$ & $0.881$ \\
        \hline $[20,25]$ & $[25,30]$ & $1.0$ \\
        \hline
    \end{tabular}
    \caption{Probabilities of \moswl holding over trimmed state pairs of \mba.}
    \label{tab:MoSEvalWaterTank}
    \vspace*{-8mm}
\end{table}

\section{Related Work}
\label{sec:related}

Many works have applied statistical analysis methods to probabilistic systems with non-determinism. As previously mentioned, \cite{legay2014LSS} introduced LSS and improved upon it with smart sampling \cite{DArgenio2015SmartLSS}. These LSS techniques have been implemented in the MODEST model checker \cite{hartmanns2014modest}. These methods have been applied to the analysis of timed probabilistic models in \cite{DArgenio2016TimedLSS,Hartmanns2017TimedLSS} and rare event detection \cite{Budde2020RareLSS}. Reinforcement learning techniques have also been employed to find min schedulers of markov decision processes \cite{Henriques2012}.

Previous works have addressed compositional reasoning for probabilistic models, such as extending notions of simulation and bisimulation to PAs~\cite{SegalaThesis,Segala1994,Gebler_2013}.
These relations are preserved under parallel composition, but they are too fine-grained for compositional verification. Another approach uses the notion of trace distribution inclusion (distributions over traces of automata actions) for compositional reasoning \cite{Segala1995}, but this relation is not preserved under parallel composition. Other works target more restrictive modeling formalisms than PAs, such as reactive modules \cite{DiAlfaro2001} and probabilistic I/O systems \cite{Cheung2004}. Another work \cite{Feng11} automatically generates assumptions for composition. However, it forbids the composed model from having nondeterministic transitions, which does not work with the abstractions presented in this paper. 

The most conceptually similar topic to our MoS assumptions is \textit{sensitivity analysis} of dynamical systems, which aims to quantify how changes in system initial conditions affect system traces \cite{donze2007}.


\looseness=-1
Reasoning about the safety of autonomous vehicles has been studied from other perspectives as well. The Responsibility-Sensitive Safety (RSS) model~\cite{mobileye} uses physics-based safety analysis for provable safety in different driving scenarios for autonomous vehicles. RSS does not consider ML-based perception models in the closed loop of their autonomous systems. Finally, researchers have investigated the use of falsification to find perception outputs that cause unsafe system behaviors~\cite{comp-fal}.

\section{Conclusion}
\label{sec:discussion}

This paper introduced an abstraction approach for scalable probabilistic model checking and data efficient lightweight scheduler sampling of probabilistic automata. Specifically, we use intuitive assumptions of monotonic safety to remove non-determinism from PAs. When such assumptions hold, they lead to provably conservative abstractions. Even when these assumptions do not perfectly hold, MoS-based simplifications remain empirically conservative. We demonstrated these MoS-driven trimmings on case studies of a self-driving car and a water tank. Future work includes analytically bounding conservatism loss from MoS violations, extending the MoS trimming techniques to distributions of states, and exploring how our notion of MoS relate to sensitivity analysis for dynamical systems. More generally, we see the development of abstractions suitable for LSS as a promising area to explore.

\section*{Acknowledgments}

The authors thank Ramneet Kaur for helping to develop the AEBS case study.
This work was supported in part by ARO W911NF-20-1-0080, AFRL and DARPA FA8750-18-C-0090, and ONR N00014-20-1-2744. Any opinions, findings and conclusions or recommendations expressed in this material are those of the authors and do not necessarily reflect the views of the Air Force Research Laboratory (AFRL), the Army Research Office (ARO), the Defense Advanced Research Projects Agency (DARPA), the Office of Naval Research (ONR) or the Department of Defense, or the United States Government.


\bibliographystyle{ACM-Reference-Format}
\bibliography{ICCPS/sample_base}


\begin{thebibliography}{31}


\ifx \showCODEN    \undefined \def \showCODEN     #1{\unskip}     \fi
\ifx \showDOI      \undefined \def \showDOI       #1{#1}\fi
\ifx \showISBNx    \undefined \def \showISBNx     #1{\unskip}     \fi
\ifx \showISBNxiii \undefined \def \showISBNxiii  #1{\unskip}     \fi
\ifx \showISSN     \undefined \def \showISSN      #1{\unskip}     \fi
\ifx \showLCCN     \undefined \def \showLCCN      #1{\unskip}     \fi
\ifx \shownote     \undefined \def \shownote      #1{#1}          \fi
\ifx \showarticletitle \undefined \def \showarticletitle #1{#1}   \fi
\ifx \showURL      \undefined \def \showURL       {\relax}        \fi
\providecommand\bibfield[2]{#2}
\providecommand\bibinfo[2]{#2}
\providecommand\natexlab[1]{#1}
\providecommand\showeprint[2][]{arXiv:#2}

\bibitem[\protect\citeauthoryear{Budde, Hartmanns, D'Argenio, and
  Sedwards}{Budde et~al\mbox{.}}{2020}]%
        {Budde2020RareLSS}
\bibfield{author}{\bibinfo{person}{Carlos Budde}, \bibinfo{person}{ Arnd
  Hartmanns}, \bibinfo{person}{Pedro D'Argenio}, {and} \bibinfo{person}{Sean
  Sedwards}.} \bibinfo{year}{2020}\natexlab{}.
\newblock \showarticletitle{An efficient statistical model checker for
  nondeterminism and rare events}.
\newblock \bibinfo{journal}{\emph{International Journal on Software Tools for
  Technology Transfer}}  \bibinfo{volume}{22} (\bibinfo{date}{12}
  \bibinfo{year}{2020}), \bibinfo{pages}{1--22}.
\newblock
\urldef\tempurl%
\url{https://doi.org/10.1007/s10009-020-00563-2}
\showDOI{\tempurl}


\bibitem[\protect\citeauthoryear{Budde, D'Argenio, Hartmanns, and
  Sedwards}{Budde et~al\mbox{.}}{2018}]%
        {Budde2018RareLSS}
\bibfield{author}{\bibinfo{person}{Carlos~E. Budde}, \bibinfo{person}{Pedro~R.
  D'Argenio}, \bibinfo{person}{Arnd Hartmanns}, {and} \bibinfo{person}{Sean
  Sedwards}.} \bibinfo{year}{2018}\natexlab{}.
\newblock \showarticletitle{A Statistical Model Checker for Nondeterminism and
  Rare Events}. In \bibinfo{booktitle}{\emph{Tools and Algorithms for the
  Construction and Analysis of Systems}},
  \bibfield{editor}{\bibinfo{person}{Dirk Beyer} {and} \bibinfo{person}{Marieke
  Huisman}} (Eds.). \bibinfo{publisher}{Springer International Publishing},
  \bibinfo{address}{Cham}, \bibinfo{pages}{340--358}.
\newblock
\showISBNx{978-3-319-89963-3}


\bibitem[\protect\citeauthoryear{Burton, Gauerhof, and Heinzemann}{Burton
  et~al\mbox{.}}{2017}]%
        {burton_making_2017}
\bibfield{author}{\bibinfo{person}{Simon Burton}, \bibinfo{person}{Lydia
  Gauerhof}, {and} \bibinfo{person}{Christian Heinzemann}.}
  \bibinfo{year}{2017}\natexlab{}.
\newblock \showarticletitle{Making the {Case} for {Safety} of {Machine}
  {Learning} in {Highly} {Automated} {Driving}}. In
  \bibinfo{booktitle}{\emph{Computer {Safety}, {Reliability}, and {Security}}}
  \emph{(\bibinfo{series}{Lecture {Notes} in {Computer} {Science}})}.
  \bibinfo{publisher}{Springer International Publishing},
  \bibinfo{address}{Cham}, \bibinfo{pages}{5--16}.
\newblock
\showISBNx{978-3-319-66284-8}
\urldef\tempurl%
\url{https://doi.org/10.1007/978-3-319-66284-8\_1}
\showDOI{\tempurl}


\bibitem[\protect\citeauthoryear{Cheung, Lynch, Segala, and Vaandrager}{Cheung
  et~al\mbox{.}}{2005}]%
        {Cheung2004}
\bibfield{author}{\bibinfo{person}{Ling Cheung}, \bibinfo{person}{Nancy Lynch},
  \bibinfo{person}{Roberto Segala}, {and} \bibinfo{person}{Frits Vaandrager}.}
  \bibinfo{year}{2005}\natexlab{}.
\newblock \showarticletitle{Switched Probabilistic I/O Automata}. In
  \bibinfo{booktitle}{\emph{Theoretical Aspects of Computing - ICTAC 2004}},
  \bibfield{editor}{\bibinfo{person}{Zhiming Liu} {and}
  \bibinfo{person}{Keijiro Araki}} (Eds.). \bibinfo{publisher}{Springer Berlin
  Heidelberg}, \bibinfo{address}{Berlin, Heidelberg},
  \bibinfo{pages}{494--510}.
\newblock
\showISBNx{978-3-540-31862-0}


\bibitem[\protect\citeauthoryear{Chhokra, Mahadevan, Dubey, and Karsai}{Chhokra
  et~al\mbox{.}}{2020}]%
        {Chhokra20}
\bibfield{author}{\bibinfo{person}{Ajay Chhokra}, \bibinfo{person}{Nagabhushan
  Mahadevan}, \bibinfo{person}{Abhishek Dubey}, {and} \bibinfo{person}{Gabor
  Karsai}.} \bibinfo{year}{2020}\natexlab{}.
\newblock \showarticletitle{Qualitative Fault Modeling in Safety Critical Cyber
  Physical Systems}. In \bibinfo{booktitle}{\emph{Proceedings of the 12th
  System Analysis and Modelling Conference}} (Virtual Event, Canada)
  \emph{(\bibinfo{series}{SAM '20})}. \bibinfo{publisher}{Association for
  Computing Machinery}, \bibinfo{address}{New York, NY, USA},
  \bibinfo{pages}{128–137}.
\newblock
\showISBNx{9781450381406}
\urldef\tempurl%
\url{https://doi.org/10.1145/3419804.3420273}
\showDOI{\tempurl}


\bibitem[\protect\citeauthoryear{D'Argenio, Hartmanns, Legay, and
  Sedwards}{D'Argenio et~al\mbox{.}}{2016}]%
        {DArgenio2016TimedLSS}
\bibfield{author}{\bibinfo{person}{Pedro D'Argenio}, \bibinfo{person}{Arnd
  Hartmanns}, \bibinfo{person}{Axel Legay}, {and} \bibinfo{person}{Sean
  Sedwards}.} \bibinfo{year}{2016}\natexlab{}.
\newblock \showarticletitle{Statistical Approximation of Optimal Schedulers for
  Probabilistic Timed Automata}. \bibinfo{pages}{99--114}.
\newblock
\showISBNx{978-3-319-33692-3}
\urldef\tempurl%
\url{https://doi.org/10.1007/978-3-319-33693-0_7}
\showDOI{\tempurl}


\bibitem[\protect\citeauthoryear{de~Alfaro, Henzinger, and Jhala}{de~Alfaro
  et~al\mbox{.}}{2001}]%
        {DiAlfaro2001}
\bibfield{author}{\bibinfo{person}{Luca de Alfaro}, \bibinfo{person}{Thomas~A.
  Henzinger}, {and} \bibinfo{person}{Ranjit Jhala}.}
  \bibinfo{year}{2001}\natexlab{}.
\newblock \showarticletitle{Compositional Methods for Probabilistic Systems}.
  In \bibinfo{booktitle}{\emph{CONCUR 2001 --- Concurrency Theory}},
  \bibfield{editor}{\bibinfo{person}{Kim~G. Larsen} {and}
  \bibinfo{person}{Mogens Nielsen}} (Eds.). \bibinfo{publisher}{Springer Berlin
  Heidelberg}, \bibinfo{address}{Berlin, Heidelberg},
  \bibinfo{pages}{351--365}.
\newblock
\showISBNx{978-3-540-44685-9}


\bibitem[\protect\citeauthoryear{Donz{\'e} and Maler}{Donz{\'e} and
  Maler}{2007}]%
        {donze2007}
\bibfield{author}{\bibinfo{person}{Alexandre Donz{\'e}} {and}
  \bibinfo{person}{Oded Maler}.} \bibinfo{year}{2007}\natexlab{}.
\newblock \showarticletitle{Systematic simulation using sensitivity analysis}.
  In \bibinfo{booktitle}{\emph{International Workshop on Hybrid Systems:
  Computation and Control}}. Springer, \bibinfo{pages}{174--189}.
\newblock


\bibitem[\protect\citeauthoryear{Dosovitskiy, Ros, Codevilla, Lopez, and
  Koltun}{Dosovitskiy et~al\mbox{.}}{2017}]%
        {Dosovitskiy17}
\bibfield{author}{\bibinfo{person}{Alexey Dosovitskiy}, \bibinfo{person}{German
  Ros}, \bibinfo{person}{Felipe Codevilla}, \bibinfo{person}{Antonio Lopez},
  {and} \bibinfo{person}{Vladlen Koltun}.} \bibinfo{year}{2017}\natexlab{}.
\newblock \showarticletitle{{CARLA}: {An} Open Urban Driving Simulator}. In
  \bibinfo{booktitle}{\emph{Proceedings of the 1st Annual Conference on Robot
  Learning}}. \bibinfo{pages}{1--16}.
\newblock


\bibitem[\protect\citeauthoryear{Dreossi, Donze, and Seshia}{Dreossi
  et~al\mbox{.}}{2017}]%
        {comp-fal}
\bibfield{author}{\bibinfo{person}{Tommaso Dreossi}, \bibinfo{person}{Alexandre
  Donze}, {and} \bibinfo{person}{Sanjit~A Seshia}.}
  \bibinfo{year}{2017}\natexlab{}.
\newblock \showarticletitle{Compositional Falsification of Cyber-Physical
  Systems with Machine Learning Components}.
\newblock  (\bibinfo{year}{2017}).
\newblock


\bibitem[\protect\citeauthoryear{D’Argenio, Legay, Sedwards, and
  Traonouez}{D’Argenio et~al\mbox{.}}{2015}]%
        {DArgenio2015SmartLSS}
\bibfield{author}{\bibinfo{person}{Pedro D’Argenio}, \bibinfo{person}{Axel
  Legay}, \bibinfo{person}{Sean Sedwards}, {and} \bibinfo{person}{Louis-Marie
  Traonouez}.} \bibinfo{year}{2015}\natexlab{}.
\newblock \showarticletitle{Smart sampling for lightweight verification of
  Markov decision processes}.
\newblock \bibinfo{journal}{\emph{International Journal on Software Tools for
  Technology Transfer}} \bibinfo{volume}{17}, \bibinfo{number}{4}
  (\bibinfo{year}{2015}), \bibinfo{pages}{469--484}.
\newblock


\bibitem[\protect\citeauthoryear{Fainekos and Pappas}{Fainekos and
  Pappas}{2006}]%
        {stlrobustness}
\bibfield{author}{\bibinfo{person}{Georgios~E Fainekos} {and}
  \bibinfo{person}{George~J Pappas}.} \bibinfo{year}{2006}\natexlab{}.
\newblock \showarticletitle{Robustness of temporal logic specifications}.
\newblock In \bibinfo{booktitle}{\emph{Formal approaches to software testing
  and runtime verification}}. \bibinfo{publisher}{Springer},
  \bibinfo{pages}{178--192}.
\newblock


\bibitem[\protect\citeauthoryear{Faria}{Faria}{2017}]%
        {faria_non-determinism_2017}
\bibfield{author}{\bibinfo{person}{J.~M. Faria}.}
  \bibinfo{year}{2017}\natexlab{}.
\newblock \showarticletitle{Non-determinism and {Failure} {Modes} in {Machine}
  {Learning}}. In \bibinfo{booktitle}{\emph{2017 {IEEE} {International}
  {Symposium} on {Software} {Reliability} {Engineering} {Workshops}
  ({ISSREW})}}. \bibinfo{pages}{310--316}.
\newblock
\urldef\tempurl%
\url{https://doi.org/10.1109/ISSREW.2017.64}
\showDOI{\tempurl}


\bibitem[\protect\citeauthoryear{Feng, Han, Kwiatkowska, and Parker}{Feng
  et~al\mbox{.}}{2011}]%
        {Feng11}
\bibfield{author}{\bibinfo{person}{L. Feng}, \bibinfo{person}{T. Han},
  \bibinfo{person}{M. Kwiatkowska}, {and} \bibinfo{person}{D. Parker}.}
  \bibinfo{year}{2011}\natexlab{}.
\newblock \showarticletitle{Learning-based Compositional Verification for
  Synchronous Probabilistic Systems}. In \bibinfo{booktitle}{\emph{Proc. 9th
  International Symposium on Automated Technology for Verification and Analysis
  (ATVA'11)}} \emph{(\bibinfo{series}{LNCS}, Vol.~\bibinfo{volume}{6996})}.
  \bibinfo{publisher}{Springer}, \bibinfo{pages}{511--521}.
\newblock


\bibitem[\protect\citeauthoryear{Gebler and Tini}{Gebler and Tini}{2013}]%
        {Gebler_2013}
\bibfield{author}{\bibinfo{person}{Daniel Gebler} {and} \bibinfo{person}{Simone
  Tini}.} \bibinfo{year}{2013}\natexlab{}.
\newblock \showarticletitle{Compositionality of Approximate Bisimulation for
  Probabilistic Systems}.
\newblock \bibinfo{journal}{\emph{Electronic Proceedings in Theoretical
  Computer Science}}  \bibinfo{volume}{120} (\bibinfo{date}{Jul}
  \bibinfo{year}{2013}), \bibinfo{pages}{32–46}.
\newblock
\showISSN{2075-2180}
\urldef\tempurl%
\url{https://doi.org/10.4204/eptcs.120.4}
\showDOI{\tempurl}


\bibitem[\protect\citeauthoryear{Hartmanns and Hermanns}{Hartmanns and
  Hermanns}{2014}]%
        {hartmanns2014modest}
\bibfield{author}{\bibinfo{person}{Arnd Hartmanns} {and}
  \bibinfo{person}{Holger Hermanns}.} \bibinfo{year}{2014}\natexlab{}.
\newblock \showarticletitle{The Modest Toolset: An integrated environment for
  quantitative modelling and verification}. In
  \bibinfo{booktitle}{\emph{International Conference on Tools and Algorithms
  for the Construction and Analysis of Systems}}. Springer,
  \bibinfo{pages}{593--598}.
\newblock


\bibitem[\protect\citeauthoryear{Hartmanns, Sedwards, and D'Argenio}{Hartmanns
  et~al\mbox{.}}{2017}]%
        {Hartmanns2017TimedLSS}
\bibfield{author}{\bibinfo{person}{Arnd Hartmanns}, \bibinfo{person}{Sean
  Sedwards}, {and} \bibinfo{person}{Pedro~R. D'Argenio}.}
  \bibinfo{year}{2017}\natexlab{}.
\newblock \showarticletitle{Efficient simulation-based verification of
  probabilistic timed automata}. In \bibinfo{booktitle}{\emph{2017 Winter
  Simulation Conference (WSC)}}. \bibinfo{pages}{1419--1430}.
\newblock
\urldef\tempurl%
\url{https://doi.org/10.1109/WSC.2017.8247885}
\showDOI{\tempurl}


\bibitem[\protect\citeauthoryear{Henriques, Martins, Zuliani, Platzer, and
  Clarke}{Henriques et~al\mbox{.}}{2012}]%
        {Henriques2012}
\bibfield{author}{\bibinfo{person}{David Henriques}, \bibinfo{person}{João~G.
  Martins}, \bibinfo{person}{Paolo Zuliani}, \bibinfo{person}{André Platzer},
  {and} \bibinfo{person}{Edmund~M. Clarke}.} \bibinfo{year}{2012}\natexlab{}.
\newblock \showarticletitle{Statistical Model Checking for Markov Decision
  Processes}. In \bibinfo{booktitle}{\emph{2012 Ninth International Conference
  on Quantitative Evaluation of Systems}}. \bibinfo{pages}{84--93}.
\newblock
\urldef\tempurl%
\url{https://doi.org/10.1109/QEST.2012.19}
\showDOI{\tempurl}


\bibitem[\protect\citeauthoryear{Koopman, Osyk, and Weast}{Koopman
  et~al\mbox{.}}{2019}]%
        {mobileye}
\bibfield{author}{\bibinfo{person}{Philip Koopman}, \bibinfo{person}{Beth
  Osyk}, {and} \bibinfo{person}{Jack Weast}.} \bibinfo{year}{2019}\natexlab{}.
\newblock \showarticletitle{Autonomous vehicles meet the physical world: RSS,
  variability, uncertainty, and proving safety}. In
  \bibinfo{booktitle}{\emph{International Conference on Computer Safety,
  Reliability, and Security}}. Springer, \bibinfo{pages}{245--253}.
\newblock


\bibitem[\protect\citeauthoryear{Kwiatkowska, Norman, and Parker}{Kwiatkowska
  et~al\mbox{.}}{2011}]%
        {Kwiatkowska2011PRISM}
\bibfield{author}{\bibinfo{person}{M. Kwiatkowska}, \bibinfo{person}{G.
  Norman}, {and} \bibinfo{person}{D. Parker}.} \bibinfo{year}{2011}\natexlab{}.
\newblock \showarticletitle{{PRISM} 4.0: Verification of Probabilistic
  Real-time Systems}. In \bibinfo{booktitle}{\emph{Proc. 23rd International
  Conference on Computer Aided Verification (CAV'11)}}
  \emph{(\bibinfo{series}{LNCS}, Vol.~\bibinfo{volume}{6806})},
  \bibfield{editor}{\bibinfo{person}{G.~Gopalakrishnan} {and}
  \bibinfo{person}{S.~Qadeer}} (Eds.). \bibinfo{publisher}{Springer},
  \bibinfo{pages}{585--591}.
\newblock


\bibitem[\protect\citeauthoryear{Kwiatkowska, Norman, Parker, and
  Qu}{Kwiatkowska et~al\mbox{.}}{2010}]%
        {kwiatkowska_assume-guarantee_2010}
\bibfield{author}{\bibinfo{person}{Marta Kwiatkowska}, \bibinfo{person}{Gethin
  Norman}, \bibinfo{person}{David Parker}, {and} \bibinfo{person}{Hongyang
  Qu}.} \bibinfo{year}{2010}\natexlab{}.
\newblock \showarticletitle{Assume-{Guarantee} verification for probabilistic
  systems}. In \bibinfo{booktitle}{\emph{Proceedings of the 16th international
  conference on {Tools} and {Algorithms} for the {Construction} and {Analysis}
  of {Systems}}} \emph{(\bibinfo{series}{{TACAS}'10})}.
  \bibinfo{publisher}{Springer-Verlag}, \bibinfo{address}{Paphos, Cyprus},
  \bibinfo{pages}{23--37}.
\newblock
\showISBNx{978-3-642-12001-5}
\urldef\tempurl%
\url{https://doi.org/10.1007/978-3-642-12002-2\_3}
\showDOI{\tempurl}


\bibitem[\protect\citeauthoryear{Kwiatkowska, Norman, Parker, and
  Qu}{Kwiatkowska et~al\mbox{.}}{2013}]%
        {Kwiatkowska2013}
\bibfield{author}{\bibinfo{person}{Marta Kwiatkowska}, \bibinfo{person}{Gethin
  Norman}, \bibinfo{person}{David Parker}, {and} \bibinfo{person}{Hongyang
  Qu}.} \bibinfo{year}{2013}\natexlab{}.
\newblock \showarticletitle{Compositional probabilistic verification through
  multi-objective model checking}.
\newblock \bibinfo{journal}{\emph{Information and Computation}}
  \bibinfo{volume}{232} (\bibinfo{year}{2013}), \bibinfo{pages}{38 -- 65}.
\newblock
\showISSN{0890-5401}
\urldef\tempurl%
\url{https://doi.org/10.1016/j.ic.2013.10.001}
\showDOI{\tempurl}


\bibitem[\protect\citeauthoryear{Lee, Yi, Kim, and Lee}{Lee
  et~al\mbox{.}}{2011}]%
        {aebs}
\bibfield{author}{\bibinfo{person}{Taeyoung Lee}, \bibinfo{person}{Kyongsu Yi},
  \bibinfo{person}{Jangseop Kim}, {and} \bibinfo{person}{Jaewan Lee}.}
  \bibinfo{year}{2011}\natexlab{}.
\newblock \showarticletitle{Development and Evaluations of Advanced Emergency
  Braking System Algorithm for the Commercial Vehicle}.
\newblock


\bibitem[\protect\citeauthoryear{Legay, Sedwards, and Traonouez}{Legay
  et~al\mbox{.}}{2014}]%
        {legay2014LSS}
\bibfield{author}{\bibinfo{person}{Axel Legay}, \bibinfo{person}{Sean
  Sedwards}, {and} \bibinfo{person}{Louis-Marie Traonouez}.}
  \bibinfo{year}{2014}\natexlab{}.
\newblock \showarticletitle{Scalable verification of Markov decision
  processes}. In \bibinfo{booktitle}{\emph{International Conference on Software
  Engineering and Formal Methods}}. Springer, \bibinfo{pages}{350--362}.
\newblock


\bibitem[\protect\citeauthoryear{Liu, Guo, Li, and Chang}{Liu
  et~al\mbox{.}}{2019}]%
        {yolonet}
\bibfield{author}{\bibinfo{person}{Chunsheng Liu}, \bibinfo{person}{Yu Guo},
  \bibinfo{person}{Shuang Li}, {and} \bibinfo{person}{Faliang Chang}.}
  \bibinfo{year}{2019}\natexlab{}.
\newblock \showarticletitle{ACF based region proposal extraction for YOLOv3
  network towards high-performance cyclist detection in high resolution
  images}.
\newblock \bibinfo{journal}{\emph{Sensors}} \bibinfo{volume}{19},
  \bibinfo{number}{12} (\bibinfo{year}{2019}), \bibinfo{pages}{2671}.
\newblock


\bibitem[\protect\citeauthoryear{Pnueli}{Pnueli}{1977}]%
        {pnueli_temporal_1977}
\bibfield{author}{\bibinfo{person}{Amir Pnueli}.}
  \bibinfo{year}{1977}\natexlab{}.
\newblock \showarticletitle{The temporal logic of programs}. In
  \bibinfo{booktitle}{\emph{18th {Annual} {Symposium} on {Foundations} of
  {Computer} {Science}, 1977}}. \bibinfo{pages}{46--57}.
\newblock
\urldef\tempurl%
\url{https://doi.org/10.1109/SFCS.1977.32}
\showDOI{\tempurl}


\bibitem[\protect\citeauthoryear{Redmon and Angelova}{Redmon and
  Angelova}{2015}]%
        {why_yolo}
\bibfield{author}{\bibinfo{person}{Joseph Redmon} {and} \bibinfo{person}{Anelia
  Angelova}.} \bibinfo{year}{2015}\natexlab{}.
\newblock \showarticletitle{Real-time grasp detection using convolutional
  neural networks}. In \bibinfo{booktitle}{\emph{2015 IEEE International
  Conference on Robotics and Automation (ICRA)}}. IEEE,
  \bibinfo{pages}{1316--1322}.
\newblock


\bibitem[\protect\citeauthoryear{Segala}{Segala}{1995}]%
        {Segala1995}
\bibfield{author}{\bibinfo{person}{Roberto Segala}.}
  \bibinfo{year}{1995}\natexlab{}.
\newblock \showarticletitle{A Compositional Trace-Based Semantics for
  Probabilistic Automata}. In \bibinfo{booktitle}{\emph{Proceedings of the 6th
  International Conference on Concurrency Theory}}
  \emph{(\bibinfo{series}{CONCUR '95})}. \bibinfo{publisher}{Springer-Verlag},
  \bibinfo{address}{Berlin, Heidelberg}, \bibinfo{pages}{234–248}.
\newblock
\showISBNx{3540602186}


\bibitem[\protect\citeauthoryear{Segala}{Segala}{1996}]%
        {SegalaThesis}
\bibfield{author}{\bibinfo{person}{R. Segala}.}
  \bibinfo{year}{1996}\natexlab{}.
\newblock \bibinfo{booktitle}{\emph{Modeling and Verification of Randomized
  Distributed Real -Time Systems}}.
\newblock \bibinfo{type}{{T}echnical {R}eport}. \bibinfo{address}{USA}.
\newblock


\bibitem[\protect\citeauthoryear{Segala and Lynch}{Segala and Lynch}{1994}]%
        {Segala1994}
\bibfield{author}{\bibinfo{person}{Roberto Segala} {and}
  \bibinfo{person}{Nancy~A. Lynch}.} \bibinfo{year}{1994}\natexlab{}.
\newblock \showarticletitle{Probabilistic Simulations for Probabilistic
  Processes}. In \bibinfo{booktitle}{\emph{Proceedings of the Concurrency
  Theory}} \emph{(\bibinfo{series}{CONCUR '94})}.
  \bibinfo{publisher}{Springer-Verlag}, \bibinfo{address}{Berlin, Heidelberg},
  \bibinfo{pages}{481–496}.
\newblock
\showISBNx{3540583297}


\bibitem[\protect\citeauthoryear{Yampolskiy}{Yampolskiy}{2018}]%
        {yampolskiy_predicting_2018}
\bibfield{author}{\bibinfo{person}{Roman Yampolskiy}.}
  \bibinfo{year}{2018}\natexlab{}.
\newblock \showarticletitle{Predicting future {AI} failures from historic
  examples}.
\newblock \bibinfo{journal}{\emph{Foresight}}  \bibinfo{volume}{21}
  (\bibinfo{year}{2018}).
\newblock
\urldef\tempurl%
\url{https://doi.org/10.1108/FS-04-2018-0034}
\showDOI{\tempurl}


\end{thebibliography}

\appendix
\pagebreak
\section*{Appendix}

\setcounter{counterexample}{0}

\subsection{Counterexamples to Monotonic Safety in AEBS}\label{app:cexsAEBS}

In all counterexamples, we set $\timestep = 1s$ and window size $N_{F}=1$. We adopt the notation $Pr(\oursafeprop ~|~ d, s)$ as a shorthand for $Pr_\mdl(d,s)(\oursafeprop)$. 

\begin{counterexample}\label{th:ce1}
Assumption \mosdist does not hold in the AEBS with one BP and distance-dependent perception for shift $\Delta = (1 m, 0)$ when $B = 10 m/s^2, v_0 = 11 m/s, d_0 = 13m,$ and $\detprob(d) = 1 - \lceil d\rceil/20$.
\end{counterexample}
\begin{proof}
We set \oldstate = (13, 11) and \newstate = (14, 11). Then we calculate as follows.
\begin{align*}
    &Pr_{\mdl(\newstate)}(\oursafeprop) = \\
    &  \detprob(14) * Pr(\oursafeprop|3, 1) + (1 - \detprob(14)) *\underbrace{Pr(\oursafeprop|3,11)}_{=0} = \\ 
    & \detprob(14) *( \detprob(3) * \underbrace{Pr(\oursafeprop|2,0)}_{=1} \\
    & \; \; \; \; + (1-\detprob(3))*\underbrace{Pr(\oursafeprop|2,1)}_{=\detprob(2)}) = \\
   & \detprob(14) *(\detprob(3) + (1 - \detprob(3))* \detprob(2)) = \\
   & \detprob(14) *(\detprob(3) + \detprob(2) - \detprob(3)*\detprob(2)). 
\end{align*}
\begin{align*}
    &Pr_{\mdl(\oldstate)}(\oursafeprop) = \\
    &  \detprob(13) * Pr(safe|2, 1) + (1 - \detprob(13)) *\underbrace{Pr(safe|2,11)}_{=0} = \\ 
    & \detprob(13) *( \detprob(2) * \underbrace{Pr(safe|1,0)}_{=1} + \\
    & \; \; \; \; (1-\detprob(2))*\underbrace{Pr(safe|1,1)}_{=0}) = \\
  & \detprob(13) * \detprob(2). 
\end{align*}

If we set $\detprob(d) = 1 - \lceil d\rceil/20$ (i.e., the detection probability grows linearly from 0\% at 20 meters to 100\% at 0 meters), then the above falsifies MoS: $$Pr_{\mdl(\oldstate)}(\oursafeprop) = 0.315 > 0.2955 = Pr_{\mdl(\newstate)}(\oursafeprop)$$

P.S. This inequality is reversed for less steep detection curves like $1 - \lceil d\rceil/40$. 
\end{proof}

The second counterexample shows that going at a faster speed may be beneficial because it leads to closer distances, which result in stronger braking.

\begin{counterexample}\label{th:ce2}
Assumption \mosspeed does not hold in the AEBS  with multiple BPs and distance-independent perception for shift $\Delta = (0, -1m/s)$ when $B(d) = [ 10 m/s^2\text{ if } d\leq 11m; 3 m/s^2\text{ otherwise}], v_0 = 9 m/s, d_0 = 20m,$ and $\detprob = 0.5$.
\end{counterexample}
\begin{proof}

We set \oldstate = (20, 9) and \newstate = (20, 8). Then we calculate as follows:
\begin{align*}
    &Pr_{\mdl(\newstate)}(\oursafeprop) = \\
    & \detprob * Pr(\oursafeprop|11, 6) + (1 - \detprob) * Pr(\oursafeprop|11,9)= \\
    & \detprob * (\detprob * \underbrace{Pr(\oursafeprop|5,0)}_{=1} + (1-\detprob) *\underbrace{Pr(\oursafeprop|5,6)}_{=0}) + \\
    & \quad (1-\detprob)*(\detprob*\underbrace{Pr(\oursafeprop|2,0)}_{=1} \\
    & \quad \quad + (1-\detprob)* \underbrace{Pr(\oursafeprop|2,9)}_{=0} = \\
    & (\detprob)^2 + (1-\detprob)*\detprob = \detprob
\end{align*}
\begin{align*}
    & Pr_{\mdl(\oldstate)}(\oursafeprop) =  \\
    & \detprob * Pr(\oursafeprop|12, 5) + (1 - \detprob) * Pr(\oursafeprop|12,8)= \\
    & \detprob * (\detprob * Pr(\oursafeprop|7,2) + (1-\detprob) *\underbrace{Pr(\oursafeprop|7,5)}_{=\detprob}) + \\
    & \quad (1-\detprob)*(\detprob*\underbrace{Pr(\oursafeprop|4,5)}_{=0} \\
    & \quad \quad + (1-\detprob)* \underbrace{Pr(\oursafeprop|4,8)}_{=0} = \dots = \\
    & \detprob^2 (1 + 2 \detprob - 3 \detprob^2 + \detprob^3)
\end{align*}

It can be shown that for any $0<\detprob<1$, $\detprob > \detprob^2 (1 + 2 \detprob - 3 \detprob^2 + \detprob^3)$. For example, for $\detprob = 0.5$, $0.5 > 0.34375.$ Thus, the above falsifies MoS: $$Pr_{\mdl(\oldstate)}(\oursafeprop) >Pr_{\mdl(\newstate)}(\oursafeprop)$$

\end{proof}

\subsection{Counterexamples to Monotonic Safety in Water Tanks}\label{app:cexsWT}

\begin{counterexample}\label{th:ce-tank}
Assumption \moswl does not hold for a water tank system with $\tankcount = 1, \tanksize = 100, \outflow=3, \inflow=40, \timebound = 4, Pr(\wlp = 0) = 0.4,$ and $Pr(\wlp = \tanksize) = 0.6, \wl=10,$ and $\Delta_\wl = 30$. 
\end{counterexample}
\begin{proof}
For state $\oldstate = 10$, we compute the chance of underflow as $0.6^4 = 0.1296$, and the chance of overflow is the chance of 3 or 4 successes in Binomial distribution $Bi(4, 0.4)$, which is $0.1792$. So $Pr_{\mdl(\oldstate)}(\tanksafeprop) = 0.6912$. For state $\newstate =40$, the chance of underflow is 0. The chance of overflow is the chance of 2+ successes in Binomial distribution from the binomial distribution $Bi(4, 0.4)$, so $Pr_{\mdl(\newstate)}(\tanksafeprop) = 0.4752$.  Thus, $\newstate \moswl \oldstate$, but $Pr_{\mdl(\newstate)}(\tanksafeprop) < Pr_{\mdl(\oldstate)}(\tanksafeprop)$.
\end{proof}

\subsection{Proof of Trace Decomposition Lemma}\label{lem:traceDecompProof}

\begin{lemma}[Satisfaction Probability by Trace Decomposition]\label{lemma:traceDecomp}
Let $\mdl$ be a PA. Assume that there are no infinite paths through \mdl. Let $\sigma$ be a scheduler of $\mdl$. Then
\begin{align*}
    Pr_{\mdl}^{\sigma}(\psi) = Pr_{\mdl}^{\sigma}(\psi \& \neg \diamond s) + Pr_{\mdl}^{\sigma}(\diamond s) Pr_{\mdl(s)}^{\sigma}(\psi)
\end{align*}
\end{lemma}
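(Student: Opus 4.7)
The plan is to prove the lemma by a disjoint case split on whether a path visits $s$, followed by an application of the Markov property at the first-visit time to $s$. First, I would observe that the set of $\psi$-satisfying paths partitions disjointly into those that never reach $s$ and those that eventually do:
\begin{equation*}
\{\pi \in \pset_{\mdl}^{\sigma} : \pi \models \psi\} = \{\pi : \pi \models \psi \wedge \neg \diamond s\} \cup \{\pi : \pi \models \psi \wedge \diamond s\},
\end{equation*}
with the two right-hand-side sets disjoint. Additivity of $Pr_{\mdl}^{\sigma}$ then reduces the problem to proving the identity $Pr_{\mdl}^{\sigma}(\psi \wedge \diamond s) = Pr_{\mdl}^{\sigma}(\diamond s)\cdot Pr_{\mdl(s)}^{\sigma}(\psi)$.

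For the second term, I would work with the first-visit time $t_\pi$ to $s$; the no-infinite-paths assumption makes $t_\pi$ finite whenever $\pi \models \diamond s$. Partitioning $\diamond s$ by the concrete finite prefix $\rho$ that first ends at $s$, the cylinder probabilities $Pr_{\mdl}^{\sigma}(\rho)$ sum to $Pr_{\mdl}^{\sigma}(\diamond s)$. Because $\sigma$ is state-based and therefore memoryless, the strong Markov property applies at this stopping time: the conditional distribution of the continuation, given that the prefix is $\rho$ and ends at $s$, coincides with the path distribution of $\mdl(s)$ under $\sigma$, independently of which $\rho$ was taken.

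For an LTL safety property $\psi$, any $\pi$ with $\pi \models \diamond s$ satisfies $\psi$ iff no bad prefix occurred before time $t_\pi$ \emph{and} the suffix of $\pi$ starting at $s$ satisfies $\psi$. Since a bad prefix immediately and permanently violates $\psi$, the ``prefix-safe'' condition is automatic on any $\pi$ that has not already failed $\psi$ by the time it reaches $s$. Summing cylinder probabilities over all such first-visit prefixes and using the Markov step to factor out the common term $Pr_{\mdl(s)}^{\sigma}(\psi)$ yields the desired product expression.

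The main obstacle I expect is the clean formalization of the strong Markov property at the stopping time ``first visit to $s$'' inside the cylinder-set probability space $Pr_{\mdl}^{\sigma}$: one must argue that memorylessness of $\sigma$ renders the per-prefix conditional distributions identical to $Pr_{\mdl(s)}^{\sigma}$, and that the safety-property structure justifies replacing ``$\pi \models \psi$'' with ``suffix from $s$ satisfies $\psi$'' on exactly those paths that are not already $\psi$-violating upon reaching $s$. The no-infinite-paths hypothesis keeps the prefix sums countable and sidesteps more delicate measure-theoretic constructions that would otherwise be needed over infinite runs.
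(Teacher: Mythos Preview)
Your outline follows essentially the same route as the paper: split on whether the path visits $s$, then factor the $\psi \wedge \diamond s$ term via memorylessness of the state-based scheduler. The paper does the factoring with the chain rule $Pr_{\mdl}^{\sigma}(\psi \wedge \diamond s)=Pr_{\mdl}^{\sigma}(\diamond s)\,Pr_{\mdl}^{\sigma}(\psi \mid \diamond s)$ rather than an explicit cylinder sum, but that is cosmetic.

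There is one step you leave unjustified. You write that the ``prefix-safe'' condition is automatic on any $\pi$ that has not already failed $\psi$ by the time it reaches $s$; that is a tautology, and it does not by itself let you factor out $Pr_{\mdl}^{\sigma}(\diamond s)$. Summing only over safe first-visit prefixes gives $Pr_{\mdl}^{\sigma}(\diamond s \wedge \text{prefix safe})\cdot Pr_{\mdl(s)}^{\sigma}(\psi)$, not $Pr_{\mdl}^{\sigma}(\diamond s)\cdot Pr_{\mdl(s)}^{\sigma}(\psi)$. What closes this gap in the paper is a model-structure fact, not a property-semantics fact: every state that falsifies $\psi$ is a \emph{sink}, so no path can violate $\psi$ and subsequently reach $s$; hence \emph{every} first-visit prefix to $s$ is safe and the two sums coincide. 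You should invoke this explicitly (and note that it is an assumption on $\mdl$, not a consequence of $\psi$ being a safety property). With that single addition your argument goes through and matches the paper's.
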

\begin{proof}
We can split up the traces through $\mdl$ by taking the ones that don't pass through state $s$ and the ones that do, as these are two disjoint sets. So
\begin{align}\label{eq:sDecomp}
    Pr_{\mdl}^{\sigma}(\psi) = Pr_{\mdl}^{\sigma}(\psi \& \neg \diamond s) + Pr_{\mdl}^{\sigma}(\psi \& \diamond s)
\end{align}
Now we examine the $Pr_{\mdl}^{\sigma}(\psi \& \diamond s)$ term. Note that $\psi$ is a safety property, so it can be expressed as $\square \neg \rho$. First, we decompose this term using the chain rule for probabilities:
\begin{align}\label{eq:phiAndsDecomp}
    Pr_{\mdl}^{\sigma}(\psi \& \diamond s) = Pr_{\mdl}^\sigma(\diamond s) Pr_{\mdl}^{\sigma}(\psi | \diamond s)
\end{align}

We now focus on the $Pr_{\mdl}^{\sigma}(\psi | \diamond s)$. This asks for the measure of the traces which satisfy $\psi$ among the traces which pass through $s$. But by the construction of $\mdl$ every state which falsifies $\psi$ (i.e. every state for which $\rho$ is true) is a sink state, so it is impossible for a trace through $\mdl$ to falsify $\psi$ and then pass through $s$. Thus we have that:
\begin{align} \label{eq:phiGivensDecomp}
    Pr_{\mdl}^{\sigma}(\psi | \diamond s) = Pr_{\mdl(s)}^{\sigma} (\psi)
\end{align}
    
Substitute \Cref{eq:phiAndsDecomp} and \Cref{eq:phiGivensDecomp} into \eqref{eq:sDecomp} to complete the proof.

\end{proof}

\end{document}